\documentclass[a4paper]{article}
\usepackage[a4paper,top=3cm,bottom=2cm,left=3.5cm,right=3.5cm,marginparwidth=4cm]{geometry}
\usepackage{amsmath,amsfonts}
\usepackage{algorithmic}
\usepackage{algorithm}
\usepackage{array}
\usepackage[caption=false,font=normalsize,labelfont=sf,textfont=sf]{subfig}
\usepackage{textcomp}
\usepackage{stfloats}
\usepackage{url}
\usepackage{verbatim}
\usepackage{graphicx}
\usepackage{cite}
\usepackage{authblk}
\usepackage{mathtools}

\usepackage{mathptmx}
\usepackage{amsthm,amssymb,latexsym}
\usepackage{bbm}

\theoremstyle{plain}
\newtheorem{theorem}{Theorem}
\newtheorem{corollary}[theorem]{Corollary}
\newtheorem{lemma}[theorem]{Lemma}
\newtheorem{proposition}[theorem]{Proposition}

\theoremstyle{definition}

\newtheorem{definition}[theorem]{Definition}

\newtheorem{problem}{Problem} 

\newtheorem{observation}[theorem]{Observation}

\theoremstyle{remark}

\newcommand{\indeg}{\mathrm{indeg}}
\newcommand{\outdeg}{\mathrm{outdeg}}
\newcommand{\head}{\mathrm{head}}
\newcommand{\tail}{\mathrm{tail}}

\newif\ifhighlight
\highlighttrue 
\usepackage{xcolor}
\definecolor{mypink}{rgb}{0.9, 0.0, 0.4}
\definecolor{mygreen}{rgb}{0.0,0.5,0.0}
\definecolor{mypurple}{rgb}{0.6,0.0,0.6}
\definecolor{myfuchsia}{rgb}{0.9,0.0,0.9}
\definecolor{mybrown}{rgb}{0.7,0.3,0.3}
\definecolor{mygray}{rgb}{0.6,0.6,0.6}
\definecolor{myltgray}{rgb}{0.85,0.85,0.85}
\definecolor{stnavy}{HTML}{0D1CC9}
\definecolor{teal}{rgb}{0.0, 0.5, 0.5}
\ifhighlight 
\newcommand{\todo}[1]{\textcolor{red}{\small{[ToDo: #1]}}}
\newcommand{\suggested}[1]{\textcolor{myfuchsia}{\small{[Suggested: #1]}}}

\else 
\newcommand{\todo}[1]{} 
\newcommand{\suggested}[1]{} 
\fi

\begin{document}

\title{Recognizing Level-$k$-Based Phylogenetic Networks is NP-Complete}

\author[1]{Takatora Suzuki}
\affil[1]{Department of Applied Mathematics, Faculty of Science and Engineering, Waseda University, Tokyo 169-8555, Japan}

\maketitle

\begin{abstract}
Phylogenetic networks generalize phylogenetic trees by representing reticulate evolution. 
Tree-based networks and their support trees have been extensively studied, but not all networks are tree-based.
To measure how far such networks are from being tree-based, Suzuki and Hayamizu (2025) formulated the problem of finding the support network with minimum level of a given rooted almost-binary phylogenetic network. 
They conjectured that this problem is NP-hard and provided exponential-time algorithms. 
In this paper, we prove this conjecture by showing that, for every fixed integer $k \geq 1$, it is NP-complete to decide whether the minimum level is at most $k$. 
\end{abstract}

\section{Introduction}\label{sec:intro}

{P}{hylogenetic} networks are graph-theoretical models for representing reticulate evolution of species, such as horizontal gene transfer~\cite{BAPTESTE2013439,szollHosi2015genome} and hybridization~\cite{goulet2017hybridization}. Compared with phylogenetic trees, which describe only branching evolution, they can represent a wider range of evolutionary histories, but this also makes them more difficult to interpret. 
To address this issue, various subclasses of phylogenetic networks have been proposed, each capturing a biologically meaningful scenario while retaining mathematical tractability (see~\cite{kong2022classes} for a comprehensive review). 
Among these, two classes have attracted particular attention.
The first is the class of \emph{level-$k$ networks}~\cite{CHOY200593}, which were introduced to make various computational problems on phylogenetic networks tractable by restricting the local density of reticulations.
Specifically, a network is called level-$k$ if each of its biconnected components contains at most $k$ reticulations. 
Level-$1$ networks, originally defined as galled trees in~\cite{gusfield2003efficient}, have been extensively studied~\cite{gambette2012encodings,marchand_et_al:LIPIcs.WABI.2024.16,BORDEWICH202266}, and level-$k$ networks in general have also attracted considerable attention~\cite{gambette2009structure,IerselEtAl-2009-Constructing}. 
The second is the class of \emph{tree-based networks} (e.g.,~\cite{FrancisSteel-2015-WhichPhylogeneticNetworks,Zhang-2016-TreeBasedPhylogeneticNetworks,anaya2016determining,FischerFrancis-2020-HowTreebasedMy,Hayamizu-2021-StructureTheoremRooted,HayamizuMakino-2023-RankingTopkTrees,pons2019tree-based,tree-based-drawing,Non-binary-tree-based}), which captures the perspective where evolution is tree-like with some reticulate events. 
More precisely, a network is called tree-based if it has at least one \emph{support tree}, defined as a spanning tree sharing the root and leaf set.

Tree-based networks and their support trees have been the subject of active research since their introduction by Francis and Steel~\cite{FrancisSteel-2015-WhichPhylogeneticNetworks}. 
Several studies have shown that whether a given network is tree-based can be decided in linear time~\cite{FrancisSteel-2015-WhichPhylogeneticNetworks,Zhang-2016-TreeBasedPhylogeneticNetworks,Non-binary-tree-based,Hayamizu-2021-StructureTheoremRooted}. 
A particularly influential development is Hayamizu's structure theorem for rooted binary phylogenetic networks (or more generally, for almost-binary networks)~\cite{Hayamizu-2021-StructureTheoremRooted}.
This theorem provides a direct-product characterization of the family of support trees in a given network by canonically decomposing the network into a unique set of subgraphs called maximal zig-zag trails.
This characterization yields optimal algorithms for various computational problems on support trees in a unified manner, such as counting, enumeration, listing, and optimization.
On the other hand, it has also been shown that it is NP-hard to decide whether a given tree-based network has a support tree that is a subdivision of a given tree~\cite{anaya2016determining} (see, e.g.,~\cite{tree-based-drawing} for other hardness result on tree-based networks).

However, not all phylogenetic networks are tree-based. 
Several approaches have been proposed to solve computational problems on such networks~\cite{FischerFrancis-2020-HowTreebasedMy,Hayamizu-2021-StructureTheoremRooted,SuzukiEtAl--BridgingDeviationIndices,fss2018,DavidovEtAl-2021-MaximumCoveringSubtrees,SuzukiEtAl-2025-WhichPhylogeneticNetworks}. 
One natural approach is to consider substructures that generalize support trees and to optimize how ``close to a support tree'' such a substructure is. 
Along this line, Suzuki and Hayamizu~\cite{SuzukiEtAl-2025-WhichPhylogeneticNetworks} introduced the notion of \emph{support networks} for rooted almost-binary phylogenetic networks $N$, defined as spanning subgraphs of $N$ sharing the root and the leaf set. 
They then formulated two optimization problems: for a given rooted almost-binary phylogenetic network~$N$, 
\textsc{Reticulation Minimization} asks for a support network of~$N$ with the minimum number of reticulations, and 
\textsc{Level Minimization} asks for a support network of $N$ with the minimum level. 
The minimum level among all support networks of $N$ is called the \emph{base level} of $N$. 
For the former problem, they provided a linear-time algorithm. 
For the latter problem, they conjectured its NP-hardness and developed two exponential-time algorithms: an exact algorithm and a faster heuristic. 
These algorithms for the two problems are based on their extension of the structure theorem~\cite{Hayamizu-2021-StructureTheoremRooted} from support trees to support networks, which also serves as the theoretical foundation of this paper.

In this paper, we prove that \textsc{Level Minimization} is NP-hard.
We introduce its decision version, which we call \textsc{Level Decision}: given a rooted almost-binary phylogenetic network~$N$ and an integer~$k\geq 0$, decide whether the base level of $N$ is at most $k$. 
We note that the case $k = 0$ coincides with the problem of deciding whether $N$ is tree-based, which is known to be linear-time solvable. 
We prove that \textsc{Level Decision} is NP-complete for every fixed $k \geq 1$; in particular, the NP-hardness of \textsc{Level Minimization} is a direct corollary of this result.

The remainder of this paper is organized as follows. 
Section~\ref{sec:preliminaries} introduces preliminaries, and 
Section~\ref{sec:problem} formulates \textsc{Level Minimization} and 
its decision version \textsc{Level Decision}. 
Section~\ref{sec:related} recalls two key results 
from~\cite{Hayamizu-2021-StructureTheoremRooted} and~\cite{SuzukiEtAl-2025-WhichPhylogeneticNetworks}
and extends them to the multiple-rooted setting, since our reduction uses multiple-rooted networks as gadgets.
Section~\ref{sec:np-hardness} then establishes the NP-completeness of \textsc{Level Decision} for $k=1$ via a polynomial-time reduction from 3SAT, which immediately implies the NP-hardness of \textsc{Level Minimization}. 
Section~\ref{sec:extension} extends this reduction to establish the NP-completeness of \textsc{Level Decision} for every fixed $k\geq 2$. 
Section~\ref{sec:conclusion} concludes with a discussion of open 
problems and directions for future work.

\section{Preliminaries}\label{sec:preliminaries}

\subsection{Graph theoretical terminology}\label{subsec:graph}
The graphs in this paper are finite, simple (i.e.,\ having neither loops nor multiple edges), acyclic directed  graphs, unless otherwise stated. 
For a graph $G$, $V(G)$ and $E(G)$ denote the sets of vertices and edges of $G$, respectively. 
For two graphs $G$ and $H$, $G$ is a \emph{subgraph} of $H$ if both $V(G) \subseteq V(H)$ and $E(G) \subseteq E(H)$ hold, in which case we write $G \subseteq H$. 
Two graphs $G$ and $H$ are \emph{isomorphic}, denoted by $G = H$, if there exists a bijection $\varphi : V(G) \to V(H)$ such that $(u,v) \in E(G)$ if and only if $(\varphi(u), \varphi(v)) \in E(H)$ for all $u, v \in V(G)$. A subgraph $G$ of $H$ is \emph{proper} if $G\neq H$. A subgraph  $G$ of $H$ is a \emph{spanning} subgraph of $H$ if $V(G)=V(H)$.

Given a graph $G$ and a non-empty subset $S \subseteq E(G)$, the edge-set $S$ is said to \emph{induce the subgraph  $G[S]$ of $G$}, that is, the one whose edge-set is $S$ and whose vertex-set is the set of the ends of all edges in $S$. 
For a graph $G$ with $|E(G)| \geq 1$ and a partition  $\{E_1,\dots,E_{d}\}$ of $E(G)$, the collection  $\{G[E_1], \dots, G[E_{d}]\}$ is a  \emph{decomposition} of $G$, and $G$ is said to be \emph{decomposed into} $\{G[E_1], \dots, G[E_{d}]\}$. Here we recall that a partition of a set is a collection of pairwise disjoint non-empty subsets whose union is the entire set.

For an edge $e=(u,v)$ of a graph $G$, $u$ and $v$ 
are denoted by $\tail(e)$ and $\head(e)$, respectively.
For a vertex $v$ of a graph $G$,  the \emph{in-degree of $v$ in $G$}, denoted by $\indeg_G(v)$, is the number of edges $e$ of $G$ with $\head(e) = v$. The \emph{out-degree of $v$ in $G$}, denoted by $\outdeg_G(v)$, is the number of edges $e$ of $G$ with $\tail(e) = v$.

Let $G$ be a graph. \emph{Subdividing} an edge $(u,v)$ of $G$ means replacing it with a directed path from $u$ to $v$ of length at least two. \emph{Smoothing} a vertex $v$ where $\indeg_G(v)=\outdeg_G(v)=1$ means suppressing $v$ from $G$, i.e., the reverse operation of edge subdivision.
A graph $G'$ is called a \emph{subdivision} of $G$ if $G'$ is obtained from $G$ by applying edge subdivisions zero or more times.
For two vertices $u$ and $v$ of a graph~$G$, we call $u$ an \emph{ancestor} of $v$, and $v$ a \emph{descendant} of $u$, if there exists a directed path from $u$ to $v$. 

An undirected graph is \emph{connected} if there is a path between every pair of vertices. 
A directed graph is \emph{connected} if its underlying undirected graph is connected.
For a connected simple undirected graph $G$, a \emph{cut vertex} (resp.\ \emph{cut edge}) of $G$ is a vertex (resp.\ edge) whose removal disconnects $G$. 
A subgraph $H$ of $G$ is called a \emph{block} of $G$ if $H$ is a maximal connected subgraph of $G$ that contains no cut vertex of $H$. 
In this paper, a \emph{block} of a directed graph refers to a block of its underlying undirected graph.
A block is \emph{trivial} if it consists of a single edge or a single vertex; otherwise, it is \emph{non-trivial}.

\subsection{Phylogenetic networks}\label{subsec:phylo net}
Throughout this paper, $X$ represents a non-empty finite set, which can be interpreted as a set of present-day species.
We recall the definitions of rooted almost-binary phylogenetic networks and rooted binary phylogenetic trees from~\cite{Hayamizu-2021-StructureTheoremRooted, SuzukiEtAl-2025-WhichPhylogeneticNetworks}. 

\begin{definition}\label{def:phylo net}
  A \emph{rooted almost-binary phylogenetic network on $X$} is defined to be a finite simple directed acyclic graph $N$ with the following properties (P1)--(P3):
\begin{description}
    \item[(P1)] there exists a unique vertex $\rho$ with in-degree $0$ and out-degree $1$ or $2$ in $N$;
    \item[(P2)] the set of vertices with in-degree $1$ and out-degree $0$ in $N$ is identical to $X$; and
    \item[(P3)] every vertex $v\in V(N)\setminus(\{\rho\}\cup X)$ satisfies $\indeg_N(v) \in \{1,2\}$ and $\outdeg_N(v) \in \{1,2\}$.
\end{description}
When $N$ has properties (P1), (P2) and (P4), $N$ is particularly called a rooted \emph{binary} phylogenetic \emph{tree} on $X$.
\begin{description}
	\item[(P4)] Every vertex $v\in V(N)\setminus(\{\rho\}\cup X)$ satisfies $\indeg_N(v) = 1$ and $\outdeg_N(v) =2$.
\end{description}
\end{definition}
In Definition~\ref{def:phylo net}, we call the vertex $\rho$ \emph{the root} of $N$, each element of $X$ a \emph{leaf} of $N$. We call a non-root and non-leaf vertex $v$ a \emph{reticulation} of $N$ if $\indeg_N(v) = 2$.

Let $N$ be a rooted almost-binary phylogenetic network on~$X$.
If $G$ is a spanning subgraph of $N$ and is a subdivision of some rooted binary phylogenetic tree on~$X$, then $G$ is a \emph{support tree} of $N$. 
If $N$ has a support tree, then $N$ is called \emph{tree-based}.

\begin{definition}\label{def:support net}
  Let $N$ be a rooted almost-binary phylogenetic network on~$X$ and let $G$ be a spanning subgraph of $N$. If $G$ is a rooted almost-binary phylogenetic network on $X$, then $G$ is called a \emph{support network} of $N$. 
A support network $G$ of $N$ is called \emph{minimal} if no support network of $N$ is a proper subgraph of $G$, and is called \emph{minimum} if $G$ has the minimum number of edges among all support networks of $N$.
\end{definition}

Since every support network of $N = (V, E)$ has the same vertex set $V$, we shall identify a support network $G = (V, E')$ of $N$ with its edge-set $E' \subseteq E$ throughout this paper. 
We denote the families of all, minimal, and minimum support networks of $N$ by $\mathcal{A}_N, \mathcal{B}_N$ and $\mathcal{C}_N$, respectively.

For a rooted almost-binary phylogenetic network $N$, 
$\mathrm{level}(N)$ denotes the \emph{level} of $N$, i.e., the maximum number of reticulations contained in a block of $N$. The \emph{base level} of $N$, denoted by $\mathrm{level}^\ast(N)$, is the minimum value of $\mathrm{level}(G)$ over all support networks $G$ of $N$. 
If $k\geq 0$ is the base level of $N$, then $N$ is called \emph{level-$k$-based}. 
We note that $N$ is tree-based if and only if $\mathrm{level}^\ast(N) = 0$ holds since any support tree of $N$ can be viewed as a level-0 support network of $N$.

\section{Problem Definitions and Known Results}
\label{sec:problem}
In this section, we formulate the two problems whose computational complexity is the focus of this paper, \textsc{Level Minimization} (Problem~\ref{prob:level opt}) and its decision version, \textsc{Level Decision} (Problem~\ref{prob:level-k-based decision}).
We then recall a key lemma (Lemma~\ref{lem:BN search correctness}) that will be repeatedly used in our NP-completeness proofs in Sections~\ref{sec:np-hardness} and~\ref{sec:extension}. 

\begin{problem}[\textsc{Level Minimization}, Problem~2 in~\cite{SuzukiEtAl-2025-WhichPhylogeneticNetworks}]
  \label{prob:level opt}
  Given a rooted almost-binary phylogenetic network $N$ on $X$, compute the base level, $\mathrm{level}^\ast(N)$, and find a support network $G\in \mathcal{A}_N$ with $\mathrm{level}(G)=\mathrm{level}^\ast(N)$.
\end{problem}

\begin{figure}[h]
     \centering
     \includegraphics[scale=.6]{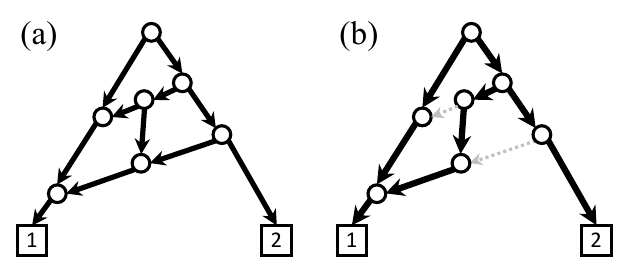}
     \caption{(a) A rooted almost-binary phylogenetic network~$N$ on~$X = \{1, 2\}$, which is not tree-based. (b) A level-1 support network~$G$ of $N$ which minimizes the level among $\mathcal{A}_N$.
     }
     \label{fig:support net}
\end{figure}  

Figure~\ref{fig:support net} illustrates an example of an input-output pair for Problem~\ref{prob:level opt}. For the input network~$N$ shown in Figure~\ref{fig:support net}(a), $N$ has no support tree and has a level-$1$ support network $G$ shown in Figure~\ref{fig:support net}(b). Hence, the output of Problem~\ref{prob:level opt} for this input consists of $\mathrm{level}^\ast(N) = 1$ together with any level-$1$ support network of $N$, such as $G$.

We note that the case $k = 0$ of \textsc{Level Decision}, i.e.\ deciding whether $\mathrm{level}^\ast(N) = 0$, coincides with deciding tree-basedness, which is solvable in linear time for rooted almost-binary networks~\cite{FrancisSteel-2015-WhichPhylogeneticNetworks,Non-binary-tree-based,Zhang-2016-TreeBasedPhylogeneticNetworks,Hayamizu-2021-StructureTheoremRooted}. Hence the NP-hardness of \textsc{Level Minimization} does not follow from any reduction via tree-basedness, in contrast to the unrooted analogue studied by Fischer and Francis~\cite{FischerFrancis-2020-HowTreebasedMy}, whose NP-hardness follows directly from that of deciding tree-basedness for unrooted networks~\cite{francis2018tree}.

Problem~\ref{prob:level-k-based decision} below is the decision version of Problem~\ref{prob:level opt}, and is no harder than Problem~\ref{prob:level opt}, since comparing the output value of Problem~\ref{prob:level opt} with $k$ immediately solves Problem~\ref{prob:level-k-based decision}.

\begin{problem}[\textsc{Level Decision}]
  \label{prob:level-k-based decision}
  Given a rooted almost-binary phylogenetic network $N$ on $X$ and an integer $k\geq 0$, determine whether $\mathrm{level}^\ast(N)\leq k$ or not.
\end{problem}

A key ingredient of our NP-completeness proofs in Sections~\ref{sec:np-hardness} and~\ref{sec:extension} is the following lemma, originally given in~\cite{SuzukiEtAl-2025-WhichPhylogeneticNetworks}. 
It shows that, for both Problems~\ref{prob:level opt} and \ref{prob:level-k-based decision}, the search space can be restricted from $\mathcal{A}_N$ to the (typically much smaller) subfamily $\mathcal{B}_N$ without losing optimality. We will invoke this restriction repeatedly throughout the remainder of the paper.

\begin{lemma}[the first statement of Theorem~19 in~\cite{SuzukiEtAl-2025-WhichPhylogeneticNetworks}]
  \label{lem:BN search correctness}
  For any rooted almost-binary phylogenetic network $N$ on $X$, there exists at least one element $G\in \mathcal{B}_N$ which satisfies $\mathrm{level}(G) = \mathrm{level}^\ast(N)$.
\end{lemma}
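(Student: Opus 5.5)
Take a support network $G\in\mathcal{A}_N$ with $\mathrm{level}(G)=\mathrm{level}^\ast(N)$; it exists because $\mathcal{A}_N$ is finite and non-empty ($N$ itself is a support network of $N$). Since $\mathcal{A}_N$ is finite and ordered by inclusion, $G$ contains some minimal support network $G'\in\mathcal{B}_N$ with $G'\subseteq G$: repeatedly delete edges while the result stays in $\mathcal{A}_N$. By the definition of $\mathrm{level}^\ast(N)$ we have $\mathrm{level}(G')\ge\mathrm{level}^\ast(N)$. So it suffices to prove the monotonicity claim: if $G'\subseteq G$ are both support networks of $N$, then $\mathrm{level}(G')\le\mathrm{level}(G)$.

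To prove monotonicity, first note that $G$ and $G'$ have the same vertex set, so every block $B'$ of $G'$ is a 2-connected subgraph of the underlying undirected graph of $G$, or a single edge or vertex. It is therefore contained in a single block $B$ of $G$, since any 2-connected subgraph lies inside one block.

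Next compare reticulations. A vertex $v$ is a reticulation of $G'$ exactly when $\indeg_{G'}(v)=2$ and $v$ is not the root or a leaf. The root and leaves are the same in $G$ and $G'$, and $E(G')\subseteq E(G)$ gives $\indeg_{G}(v)\ge 2$; by (P3) this forces $\indeg_G(v)=2$. Hence every reticulation of $G'$ is a reticulation of $G$.

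Combining the two steps, every reticulation of $G'$ lying in $B'$ is a reticulation of $G$ lying in $B$. Thus the count for $B'$ is at most the count for $B$, which is at most $\mathrm{level}(G)$. Taking the maximum over all blocks $B'$ gives $\mathrm{level}(G')\le\mathrm{level}(G)=\mathrm{level}^\ast(N)$, and therefore $\mathrm{level}(G')=\mathrm{level}^\ast(N)$, as required.

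The one step needing care is what it means for a reticulation to be \emph{contained in} a block. A vertex in $B'$ lies in $B$, so no issue arises, though the convention should be checked for cut vertices that belong to several blocks. Trivial blocks, meaning single edges, sit inside some block of $G$ and cause no difficulty.
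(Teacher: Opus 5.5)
Your proof is correct. The paper does not prove this lemma itself. It imports it as the first statement of Theorem~19 of~\cite{SuzukiEtAl-2025-WhichPhylogeneticNetworks}, so there is no in-paper argument to compare against, and what you give is a self-contained replacement.

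Your key step is that the level is monotone when passing to a smaller support network. Each block of $G'$ lies inside a block of $G$ by maximality of blocks. Each reticulation of $G'$ is a reticulation of $G$, since in-degrees can only increase, are capped at $2$ by (P3), and the root and leaves are shared. This uses none of the zig-zag trail machinery of Theorem~\ref{thm:bijection.ABC}, and it would transfer verbatim to the $p$-rooted setting of Section~\ref{sec:related}.

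The convention issue you flag is harmless under either reading. If a reticulation counts for every block containing it as a vertex, then $v\in V(B')\subseteq V(B)$. If it counts for the block containing its two incoming edges, those edges are the same in $G'$ and in $G$, and they lie in $B'\subseteq B$.

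One minor tightening concerns the existence of $G'\in\mathcal{B}_N$ with $G'\subseteq G$. This follows directly from finiteness: an inclusion-minimal support network among those contained in $G$ is minimal in $\mathcal{A}_N$. If you phrase it as deleting one edge at a time, you are implicitly using that $\mathcal{A}_N$ is closed under adding edges of $N$. That holds because (C1) and (C2) are monotone conditions, and it guarantees that a non-minimal support network always admits a single-edge deletion that stays in $\mathcal{A}_N$.
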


As an algorithmic consequence, Lemma~\ref{lem:BN search correctness} guarantees that Algorithm~1 in~\cite{SuzukiEtAl-2025-WhichPhylogeneticNetworks}, which exhaustively searches $\mathcal{B}_N$ in $O(|E(N)| \cdot |\mathcal{B}_N|)$ time, solves Problem~\ref{prob:level opt} exactly. Suzuki and Hayamizu~\cite{SuzukiEtAl-2025-WhichPhylogeneticNetworks} also presented a faster heuristic (Algorithm~2 in~\cite{SuzukiEtAl-2025-WhichPhylogeneticNetworks}) based on an exhaustive search of $\mathcal{C}_N$, which runs in $O(|E(N)| \cdot |\mathcal{C}_N|)$ time but does not always return an optimal solution.
Both algorithms run in exponential time in the worst case.

\section{An Extension to $p$-Rooted Networks:\\ Canonical Decomposition and\\ Characterizations of Support Network Families}%
\label{sec:related}

In this section, we generalize two results from previous works: the maximal zig-zag trail decomposition by Hayamizu~\cite{Hayamizu-2021-StructureTheoremRooted}; and the direct-product characterization of the families of all, minimal, and minimum support networks by Suzuki and Hayamizu~\cite{SuzukiEtAl-2025-WhichPhylogeneticNetworks}; from rooted almost-binary phylogenetic networks to a broader class of networks that may have more than one root.

In what follows, we first define $p$-rooted almost-binary phylogenetic networks in Definition~\ref{def:p-network}, where $p \geq 1$ denotes the number of roots, and restate the first result in this setting (Theorem~\ref{thm:zig-zag trail decomposition}). 
We also recall a characterization of rooted tree-based networks via the decomposition (Proposition~\ref{prop:tree-based}).
We then generalize Definition~\ref{def:support net} to $p$-rooted networks (Definition~\ref{def: p-rooted support network}) and restate the second result (Theorem~\ref{thm:bijection.ABC}).
In the proofs of Theorems~\ref{thm:zig-zag trail decomposition} and~\ref{thm:bijection.ABC}, we describe how the original proofs in~\cite{Hayamizu-2021-StructureTheoremRooted} and~\cite{SuzukiEtAl-2025-WhichPhylogeneticNetworks} can be adapted to the $p$-rooted setting.
This $p$-rooted setting is essential for our NP-completeness proofs in Sections~\ref{sec:np-hardness} and~\ref{sec:extension}, where almost-binary phylogenetic networks with more than one root arise as intermediate objects. A similar multiple-rooted generalization has also been considered in a different context~\cite{huber2022forest}.

\begin{definition}\label{def:p-network}
Let $p \geq 1$ be an integer. A \emph{$p$-rooted almost-binary phylogenetic network on $X$} is defined to be a finite simple directed acyclic graph $N$ with the following properties (P1')--(P3'):
\begin{description}
    \item[(P1')] there exist exactly $p$ vertices with in-degree $0$ and out-degree $1$ or $2$ in $N$;
    \item[(P2')] the set of vertices with in-degree $1$ and out-degree $0$ in $N$ is identical to $X$; and
    \item[(P3')] every other vertex $v$ of $N$ satisfies $\indeg_N(v) \in \{1,2\}$ and $\outdeg_N(v) \in \{1,2\}$.
\end{description}
\end{definition}
In Definition~\ref{def:p-network}, we call each of the $p$ vertices of $N$ described in (P1') a \emph{root} of $N$, the set of all roots of $N$ the \emph{root-set} of $N$, and each element of $X$ a \emph{leaf} of $N$. We call each non-root and non-leaf vertex $v$ of $N$ a \emph{reticulation} of $N$ if $\indeg_N(v) = 2$. 
We note that the case $p =1$ coincides with Definition~\ref{def:phylo net}.
Although any 1-rooted almost-binary phylogenetic network on~$X$ must be connected, we note that a $p$-rooted almost-binary phylogenetic network on~$X$ is not necessarily connected for $p\geq 2$ since different roots may belong to different components (see Figure~\ref{fig:p-rooted}(c) for an example).

Let $p\geq 1$ be an integer and let $N$ be a $p$-rooted almost-binary phylogenetic network on $X$. 
A connected subgraph $Z$ of $N$ with $m\geq 1$ edges is called a \emph{zig-zag trail} if there exists a permutation $(e_1, \dots, e_m)$ of $E(Z)$, where either $\head(e_i) = \head(e_{i+1})$ or $\tail(e_i) = \tail(e_{i+1})$ holds for all $i\in[1, m-1]$. 
In this paper, we often identify $Z$ and the permutation $(e_1, \dots, e_m)$.
A zig-zag trail $Z$ is \emph{maximal} if no zig-zag trail contains $Z$ as a proper subgraph.
A zig-zag trail is represented by an alternating sequence of (not necessarily distinct) vertices and distinct edges such as $(v_0, (v_0, v_1), v_1, (v_2, v_1), \dots, (v_m, v_{m-1}), v_m)$. 
Moreover, if we adopt the notation of each edge $(v_i, v_j)$ as $v_i > v_j$ or $v_j < v_i$, we can more concisely express $Z$ as $v_0 > v_1 < v_2\dots, v_{m-1} < v_m$ or its reverse. 
We call $|E(Z)|$ the \emph{length} of a maximal zig-zag trail~$Z$.

Every maximal zig-zag trail is classified into one of the four types. 
A maximal zig-zag trail $Z$ is called a \emph{crown} if $Z$ has even number $m$ of edges and can be written in the cyclic form $v_0 > v_1 < \dots, v_{m-1} < v_m = v_0$.
A \emph{fence} is a maximal zig-zag trail which is not a crown. 
An \emph{M-fence} is a fence with even number $m$ of edges and of the form $v_0 < v_1 > \dots, v_{m-1} > v_m \neq v_0$. 
A \emph{W-fence} is a fence with even number $m$ of edges and of the form $v_0 > v_1 < \dots, v_{m-1} < v_m \neq v_0$. 
An \emph{N-fence} is a fence with odd number $m$ of edges and of the form $v_0 > v_1 < \dots, v_{m-1} > v_m$. 

As in the approach taken in~\cite{Hayamizu-2021-StructureTheoremRooted, HayamizuMakino-2023-RankingTopkTrees,SuzukiEtAl--BridgingDeviationIndices,SuzukiEtAl-2025-WhichPhylogeneticNetworks}, we often consider a maximal zig-zag trail $Z$ as a sequence $(e_1,\dots,e_{|E(Z)|})$ of edges, ordered according to their appearance in the trail.
For any maximal zig-zag trail $Z = (e_1, \dots,  e_{|E(Z)|})$ in $N$, any subset $S$ of $E(Z)$  is specified by a $0$-$1$ sequence $\langle b_1\; b_2\; \dots \; b_{|E(Z)|}\rangle$, where $b_i = 1$ if $e_i \in S$  and $b_i = 0$ otherwise. For example, given a W-fence $Z=(e_1, e_2, e_3, e_4)$, a subset $\{e_1, e_3, e_4\}$ of $E(Z)$ can be encoded as $\langle 1 0 1 1 \rangle$.

The following Theorem~\ref{thm:zig-zag trail decomposition} was 
originally proved for rooted binary phylogenetic networks by 
Hayamizu~\cite{Hayamizu-2021-StructureTheoremRooted}, and was 
subsequently extended to rooted almost-binary phylogenetic networks 
by Suzuki et al.~\cite{SuzukiEtAl--BridgingDeviationIndices}, as 
noted in Section~6 of~\cite{Hayamizu-2021-StructureTheoremRooted}. We 
further extend it to $p$-rooted almost-binary phylogenetic networks.

\begin{theorem}%
  [\!\!\cite{Hayamizu-2021-StructureTheoremRooted}, $p$-rooted almost-binary version]
  \label{thm:zig-zag trail decomposition}
  Let $p\geq 1$ be an integer and let $N$ be a $p$-rooted almost-binary phylogenetic network on $X$. Then, $N$ is uniquely decomposed into the set $\mathcal{Z} = \{Z_1, \dots, Z_d\}$ of its maximal zig-zag trails. Moreover, this decomposition is computed in $\Theta(|E(N)|)$ time. 
\end{theorem}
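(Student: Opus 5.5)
The plan is to reduce the statement to an equivalence relation on edges whose classes are exactly the maximal zig-zag trails. On $E(N)$ we declare $e \sim e'$ if $\head(e)=\head(e')$ or $\tail(e)=\tail(e')$, and we let $\approx$ be the reflexive–transitive closure of $\sim$. By construction, two edges lie in a common zig-zag trail exactly when they are connected by a chain of $\sim$-steps. The key point is that the degree bounds in (P1')--(P3') force every vertex to have in-degree at most $2$ and out-degree at most $2$. Hence each edge $e$ has at most one $\sim$-neighbour sharing its head and at most one sharing its tail. The graph $\Gamma$ on $E(N)$ whose adjacency is $\sim$ (excluding $e$ itself) therefore has maximum degree $\le 2$, so each connected component of $\Gamma$ is a simple path or a cycle. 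A cycle component additionally alternates between head-sharing and tail-sharing steps, so it has even length.

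From this description we read off the decomposition.
\begin{itemize}
\item Each component $C$ of $\Gamma$, listed in path or cycle order $(e_1,\dots,e_m)$, satisfies the defining property of a zig-zag trail: consecutive edges share a head or a tail. Moreover $N[C]$ is connected.
\item Each such trail is maximal. Any zig-zag trail containing an edge of $C$ has its permutation forming a walk in $\Gamma$, so all of its edges lie in $C$.
\item Conversely, every maximal zig-zag trail $Z$ is contained in a single component $C$. Since $N[C]$ is itself a zig-zag trail containing $Z$, maximality gives $Z=N[C]$.
\end{itemize}
Thus the maximal zig-zag trails are exactly the subgraphs $N[C]$ for the components $C$ of $\Gamma$. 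These components partition $E(N)$, which gives a decomposition in the sense of Section~\ref{subsec:graph}, and uniqueness follows because the components are canonically determined.

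The only place the $p$-rooted generalization could matter is in the degree argument, and there it is harmless. Roots have in-degree $0$ and out-degree at most $2$, leaves have in-degree $1$, and all other vertices are covered by (P3'). Neither acyclicity beyond simplicity nor connectivity of $N$ is used. The argument from \cite{Hayamizu-2021-StructureTheoremRooted,SuzukiEtAl--BridgingDeviationIndices} therefore carries over verbatim. A disconnected $N$ simply yields trails lying in different components of $N$.

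For the running time, we first build, for every vertex, its list of in-edges and out-edges in $O(|V(N)|+|E(N)|)$ time; since $|V(N)| = O(|E(N)|)$ here (no isolated vertices), this is $O(|E(N)|)$. Each edge then has its at most two $\Gamma$-neighbours available in $O(1)$ time. A traversal of $\Gamma$ starting from unvisited edges extracts each component in order. To obtain the path order, we start at an endpoint of degree $\le 1$ whenever one exists, and otherwise at any edge of a cycle. This takes $O(|E(N)|)$ time, and the lower bound $\Omega(|E(N)|)$ is trivial because every edge must be output. The main obstacle is really just bookkeeping: making sure the traversal outputs each trail as an ordered sequence, so that the crown/fence types can be read off.
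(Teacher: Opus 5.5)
Your proof is correct, but it takes a different route from the paper. The paper does not reprove the decomposition at all. It only asserts that the proof of Theorem~4.2 and Algorithm~5.1 in \cite{Hayamizu-2021-StructureTheoremRooted} use nothing beyond finiteness and the almost-binary degree bounds, so they apply unchanged when $N$ has several roots.

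You instead give a self-contained argument. Every vertex of a $p$-rooted network has in- and out-degree at most $2$, so your auxiliary graph $\Gamma$ on $E(N)$ has maximum degree at most $2$. Its components are therefore paths or cycles, and these components are exactly the edge sets of the maximal zig-zag trails. The $\Theta(|E(N)|)$ bound then reduces to a traversal of $\Gamma$ plus the trivial output lower bound.

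The paper's route buys brevity. Its cost is that the reader must trust, or check, that nothing in the original single-rooted proof quietly depends on a unique root or on connectivity of $N$.

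Your route makes that check explicit and shows it is vacuous. The argument works for any finite simple directed graph with in- and out-degrees at most $2$; acyclicity, the root-set and connectivity play no role. It also exposes the path/even-cycle dichotomy of $\Gamma$, which is exactly the fence/crown dichotomy the paper relies on later.
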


\begin{proof}
  The original proof of Theorem~4.2 in~\cite{Hayamizu-2021-StructureTheoremRooted} relies only on the facts that $N$ is almost-binary and finite, neither of which is affected by allowing multiple roots. Similarly, Algorithm~5.1 in~\cite{Hayamizu-2021-StructureTheoremRooted}, which computes the decomposition by visiting each edge exactly once, is unaffected. This completes the proof.
\end{proof}

The maximal zig-zag trail decomposition yields the following characterization of rooted tree-based networks. Equivalent results appear in~\cite{Zhang-2016-TreeBasedPhylogeneticNetworks,Non-binary-tree-based}.

\begin{proposition}[\!\!\cite{Zhang-2016-TreeBasedPhylogeneticNetworks,Non-binary-tree-based,Hayamizu-2021-StructureTheoremRooted}]\label{prop:tree-based}
  Let~$N$ be a rooted almost-binary phylogenetic network on $X$ and let $\mathcal{Z} = \{Z_1, \dots, Z_d\}$ be the maximal zig-zag trail decomposition of $N$. 
  Then, $N$ is tree-based if and only if no element $Z_i\in \mathcal{Z}$ is a W-fence. 
\end{proposition}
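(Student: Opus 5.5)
The plan is to prove the two directions separately. Throughout we use the local structure forced by Theorem~\ref{thm:zig-zag trail decomposition}. Every edge lies in exactly one maximal zig-zag trail. Since $N$ is almost-binary, two edges sharing a head or sharing a tail lie in the same trail. Consequently, for each vertex $v$ with $\indeg_N(v)=2$ both incoming edges lie in one trail, and for each vertex with $\outdeg_N(v)=2$ both outgoing edges lie in one trail. A support tree is exactly a spanning subgraph $T$ in which every non-root vertex has in-degree exactly $1$ and every non-leaf vertex has out-degree at least $1$. (With root and leaves fixed, this makes $T$ a subdivision of a rooted binary tree on $X$.) So the question reduces to choosing edges trail by trail, subject to these constraints at the vertices where consecutive trail edges meet.

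\emph{Only if.} Suppose $Z = v_0 > v_1 < v_2 > \dots < v_m$ is a W-fence with $m=2t$ edges, so $v_1,v_3,\dots,v_{m-1}$ are the heads and $v_0,v_2,\dots,v_m$ are the tails. By maximality of $Z$, each head $v_{2i-1}$ has both in-edges in $Z$. Hence $v_{2i-1}$ is a reticulation, not a leaf and not the root, so a support tree $T$ keeps exactly one of its two in-edges from $Z$. This gives exactly $t$ edges of $Z$ in $T$.

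Now consider the tails. The endpoints $v_0$ and $v_m$ have only one out-edge in $Z$, so by maximality they have out-degree $1$ in $N$. Each interior tail $v_{2i}$ has both of its out-edges in $Z$. None of these tails is a leaf, so each must retain at least one out-edge in $Z$. Since every edge of $Z$ has exactly one tail among them, covering the $t+1$ tails needs at least $t+1$ edges. This contradicts the count of $t$, so $N$ has no support tree.

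\emph{If.} Assume no trail is a W-fence. We choose a subset of each trail independently and check the degree constraints locally.
\begin{itemize}
\item For an N-fence $v_0>v_1<\dots>v_m$ with $m$ odd, take the edges $e_1,e_3,\dots,e_m$ in the $0$-$1$ notation $\langle 1010\cdots1\rangle$. Each head then gets exactly one in-edge and each tail at least one out-edge.
\item For a crown, take alternate edges; since the trail is cyclic this also covers every head once and every tail at least once.
\item For an M-fence $v_0<v_1>\dots>v_m$, keep all edges except one from each interior head. The heads $v_0$ and $v_m$ have in-degree $1$ within the trail, so their single in-edges must be kept. Interior heads get one in-edge each, and each tail is still covered.
\end{itemize}
Every vertex's in-edges lie in a single trail, and its out-edges likewise. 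So the union $T$ of these choices gives every non-root vertex in-degree $1$ and every non-leaf vertex out-degree at least $1$. The root keeps an out-edge by the same argument, because it is a tail of a trail.

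It remains to show that $T$ is a support tree. $T$ is acyclic as a subgraph of $N$, every non-root vertex has a unique parent, and only the root has in-degree $0$. Following parents upward from any vertex must therefore terminate at the root, so $T$ is a spanning arborescence rooted at $\rho$. Its sinks are exactly the vertices of $X$, because non-leaves keep an out-edge. Smoothing the in/out-degree-$1$ vertices then yields a rooted binary phylogenetic tree on $X$.

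The main obstacle is the M-fence case together with the degree bookkeeping at trail endpoints. The key point there is that endpoint vertices of a maximal trail have degree $1$ on the relevant side in $N$, which follows from maximality; this must be argued carefully rather than assumed.
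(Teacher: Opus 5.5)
Your strategy is the standard one behind the cited result. The paper gives no proof of its own; in \cite{Hayamizu-2021-StructureTheoremRooted} the statement follows from the structure theorem, which makes the family of support trees a direct product of independent per-trail choices, with a W-fence admitting none. The reason a W-fence of length $2t$ admits none is exactly your count: its $t$ heads each need exactly one in-edge, while its $t+1$ tails each need at least one out-edge. That count is correct, but it silently uses that the heads $v_1,v_3,\dots,v_{m-1}$ and the tails $v_0,v_2,\dots,v_m$ are pairwise distinct, since vertices may repeat along a zig-zag trail. This follows immediately from the degree bound $2$ together with $v_0\neq v_m$.

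One step of the \emph{if} direction is false as written. In an M-fence you cannot drop an arbitrary in-edge at each interior head and still have every tail covered. In $v_0<v_1>v_2<v_3>v_4<v_5>v_6$, dropping $(v_3,v_2)$ at $v_2$ and $(v_3,v_4)$ at $v_4$ leaves $v_3$, which is not a leaf, with out-degree $0$ in $T$. You need a consistent rule. For example, at each interior head $v_{2i}$ drop $e_{2i}=(v_{2i-1},v_{2i})$, i.e.\ keep $e_1,e_3,\dots,e_{m-1},e_m$; then every tail $v_{2i-1}$ retains $e_{2i-1}$. More generally, the valid choices are exactly the $m/2$ patterns in which all heads that drop their left in-edge precede all heads that drop their right in-edge.

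You also still need the endpoint-degree fact you flagged, and it has a short proof. Suppose the end tail $v_0$ had a second out-edge $f$. Then $f\in E(Z)$. Since shared heads and shared tails alternate along the trail (three edges cannot share a head or a tail), $f$ would have to be sequence-adjacent to $e_1$ through their common tail. That is impossible, because $e_1$'s only neighbour is $e_2$, which shares its head. The remaining option, that $f$ is the far end edge, is excluded by $v_m\neq v_0$ or by the degree bound. The head endpoints are handled symmetrically.

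With the consistent M-fence rule and this endpoint argument in place, your proof is complete.
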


\begin{definition}\label{def: p-rooted support network}
  Let $p\geq 1$ be an integer, let $N$ be a $p$-rooted almost-binary phylogenetic network on~$X$ and let $G$ be a spanning subgraph of $N$. If $G$ is a $p$-rooted almost-binary phylogenetic network on $X$, then $G$ is called a \emph{support network} of $N$. 
  A support network of $N$ is called \emph{minimal} or \emph{minimum} in the same sense as in Definition~\ref{def:support net}.
\end{definition}
As in the $1$-rooted setting, we denote the families of all, minimal, 
and minimum support networks of $N$ by $\mathcal{A}_N$, 
$\mathcal{B}_N$, and $\mathcal{C}_N$, respectively. Each of these 
families is non-empty for any $p$-rooted network $N$, since $N$ 
itself is a support network of $N$.
We also note that, since any support network $G$ of $N$ is a spanning subgraph of $N$, no vertex of $G$ can be a root that is not already a root of $N$, and hence $G$ shares the root-set with~$N$.

\begin{figure}[h]
  \centering
  \includegraphics[scale=0.48]{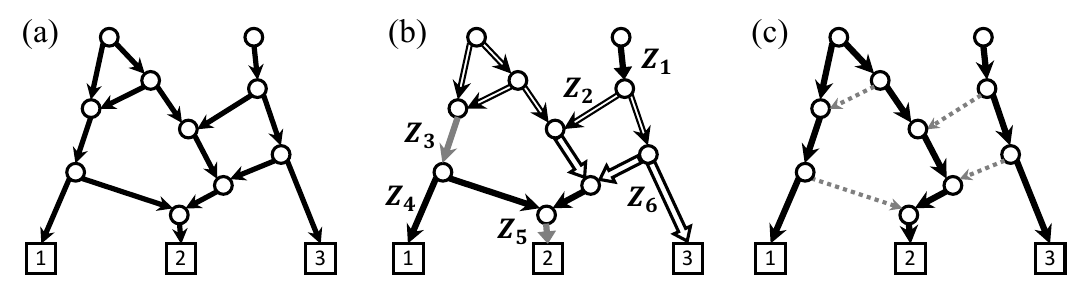}
  \caption{(a) A 2-rooted almost-binary phylogenetic network~$N$ on $X = \{1,2,3\}$. (b) The maximal zig-zag trail decomposition $\mathcal{Z} = \{Z_1, \dots, Z_6\}$ of $N$. All zig-zag trails are distinguished by different arrow styles. (c) A support network of~$N$, which is minimal and minimum.}
  \label{fig:p-rooted}
\end{figure}

  Let $p\geq 1$ be an integer, let $N$ be a $p$-rooted almost-binary phylogenetic network on $X$ and let $Z$ be any subgraph of $N$. A subset $S$ of $E(Z)$ is \emph{$\mathcal A$-admissible} if it satisfies the following conditions:
  \begin{description}
    \item [(C1)] if $(u, v)$ is an edge of $Z$ with $\outdeg_N(u) = 1$ or $\indeg_N(v) = 1$, then $S$ contains $(u,v)$; and
    \item [(C2)] if $e_1$ and $e_2$ are distinct edges of $Z$ with  $\tail(e_1) = \tail(e_2)$ or $\head(e_1) = \head(e_2)$, then $S$ contains at least one of $\{e_1, e_2\}$.
  \end{description}
 In particular, given an $\mathcal A$-admissible subset $S$ of $E(Z)$,  $S$ is \emph{$\mathcal B$-admissible} if it is minimal, i.e.\ no proper subset of $S$ is an $\mathcal A$-admissible subset of $E(Z)$, and  $S$ is \emph{$\mathcal C$-admissible} if it is smallest among all $\mathcal A$-admissible subsets of $E(Z)$.

We now summarize the characterization of $\mathcal{A}_N, \mathcal{B}_N$, and $\mathcal{C}_N$ given in Section~4 of~\cite{SuzukiEtAl-2025-WhichPhylogeneticNetworks} and extend it to $p$-rooted almost-binary phylogenetic networks~$N$ as Theorem~\ref{thm:bijection.ABC} below.

\begin{theorem}%
  [\cite{SuzukiEtAl-2025-WhichPhylogeneticNetworks}, $p$-rooted version]
  \label{thm:bijection.ABC}
  Let $p\geq 1$ be an integer, let $N$ be a $p$-rooted almost-binary phylogenetic network on~$X$, let $\mathcal{Z} = \{Z_1, \dots, Z_d\}$ be the maximal zig-zag trail decomposition of $N$, and let $\mathcal{X} \in \{\mathcal{A}, \mathcal{B}, \mathcal{C}\}$. 
  Then the following hold.
  \begin{enumerate}
    \item There is a one-to-one correspondence between the family $\mathcal{X}_N$ of support networks and the family $\mathcal{S}_{\mathcal{X}}$ of $\mathcal{X}$-admissible subsets of $E(N)$.
    \item The subgraph $N[S]$ of $N$ induced by $S\subseteq E(N)$ is a support network in $\mathcal{X}_N$ if and only if $S \cap E(Z_i)$ is an $\mathcal{X}$-admissible subset of $E(Z_i)$ for each $i\in [1,d]$.
    \item The set $\mathcal{X}_N$ is characterized by $\mathcal{X}_N = \prod_{i=1}^d \mathcal{S}_{\mathcal{X}} (Z_i)$, where $(Z_1, \dots, Z_d)$ is an arbitrary ordering of $\mathcal{Z}$.  
    \item $|\mathcal{X}_N| = \prod_{i=1}^d |\mathcal{S}_{\mathcal{X}} (Z_i)|$ holds. 
  \end{enumerate} 
\end{theorem}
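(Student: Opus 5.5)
We argue as in the proof of Theorem~\ref{thm:zig-zag trail decomposition}: the original argument in Section~4 of~\cite{SuzukiEtAl-2025-WhichPhylogeneticNetworks} is local and never uses the uniqueness of the root, so it transfers to the $p$-rooted setting. We make this explicit by re-deriving the key equivalence directly.

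\textbf{Step 1 (support networks are exactly the $\mathcal{A}$-admissible edge-sets).} Let $S\subseteq E(N)$ and let $G=(V(N),S)$. We claim $G$ is a $p$-rooted almost-binary phylogenetic network on $X$ if and only if $S$ satisfies (C1) and (C2) with respect to $N$.

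Since $G$ is a spanning subgraph, it is acyclic and every vertex degree in $G$ is at most its degree in $N$. Hence properties (P1')--(P3') hold for $G$ exactly when the following two things happen:
\begin{itemize}
\item[(a)] no vertex that has in-degree $\ge 1$ in $N$ ends up with in-degree $0$ in $G$;
\item[(b)] no vertex that has out-degree $\ge 1$ in $N$ ends up with out-degree $0$ in $G$.
\end{itemize}
Condition (a) ensures no new roots appear, so the root-set of $G$ is exactly that of $N$ and (P1') holds. Condition (b) ensures no new leaves appear, so the leaf-set stays $X$ and (P2') holds. Roots keep out-degree in $\{1,2\}$ and leaves keep in-degree $1$ by (a) and (b) as well. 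Every other vertex keeps in- and out-degree in $\{1,2\}$, giving (P3').

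Now fix a vertex $v$ with $\indeg_N(v)\ge 1$. If $\indeg_N(v)=1$, condition (a) at $v$ is exactly (C1) for the unique edge entering $v$. If $\indeg_N(v)=2$, condition (a) at $v$ is exactly (C2) for the two edges entering $v$. The out-degree side, condition (b), corresponds to (C1) and (C2) symmetrically.

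\textbf{Step 2 (locality via the decomposition).} Every pair of edges sharing a head, or sharing a tail, lies in the same maximal zig-zag trail; this follows from maximality in Theorem~\ref{thm:zig-zag trail decomposition}. Each edge also lies in exactly one trail. Therefore (C1) and (C2) for $S$ with respect to $N$ hold if and only if they hold for each $S\cap E(Z_i)$ with respect to $Z_i$. This gives statement~2 for $\mathcal{X}=\mathcal{A}$.

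\textbf{Step 3 (extending to $\mathcal{B}$ and $\mathcal{C}$).} Because the constraints decouple across the $Z_i$, we have:
\begin{itemize}
\item $S$ is inclusion-minimal among $\mathcal{A}$-admissible sets if and only if each $S\cap E(Z_i)$ is minimal in $Z_i$;
\item $|S|=\sum_i |S\cap E(Z_i)|$ is minimum if and only if each summand is minimum in its own trail.
\end{itemize}
This gives statement~2 for $\mathcal{X}\in\{\mathcal{B},\mathcal{C}\}$.

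Statement~1 is the identification $S\mapsto N[S]$. Strictly, the correspondence is with the spanning subgraph $(V(N),S)$. This agrees with $N[S]$ because, by Step~1, no vertex is isolated: every vertex of $N$ has positive total degree and keeps it in $G$. Statements~3 and~4 then follow from statement~2 together with the fact that $\mathcal{Z}$ partitions $E(N)$.

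The only genuinely new point compared with the $1$-rooted case is Step~1, namely checking that the support network has exactly $p$ roots. This is handled by the observation that spanning subgraphs cannot gain roots under (a) and cannot lose them, since roots have in-degree $0$ in $N$ already.
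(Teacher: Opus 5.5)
Your proposal is correct and follows the same route as the paper. Statement~1 for $\mathcal{A}$ rests on (C1) and (C2) being exactly the conditions that prevent new roots or new leaves, so the root-set and $X$ are preserved. Statement~2 uses that, by the almost-binary property, any two edges sharing a head or a tail lie in the same maximal zig-zag trail. The cases $\mathcal{B}$ and $\mathcal{C}$ follow because minimality and cardinality decouple across trails, and statements~3 and~4 are immediate consequences.

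The only difference is that you re-derive the degree-level equivalence of Step~1 explicitly, whereas the paper just cites Appendix~A.1 of the earlier work and observes that replacing ``the root'' by ``the root-set'' changes nothing. Your necessity direction is left implicit, but it follows at once from the root count being exactly $p$ and the leaf-set being exactly $X$.
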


\begin{proof}
We extend the proof in~\cite{SuzukiEtAl-2025-WhichPhylogeneticNetworks} to the $p$-rooted setting by checking that each step is unaffected by allowing multiple roots. For the first statement with $\mathcal{X} = \mathcal{A}$, the proof in Appendix~A.1 of~\cite{SuzukiEtAl-2025-WhichPhylogeneticNetworks} establishes a natural bijection between $\mathcal{A}$-admissible subsets $S$ of $E(N)$ and support networks $N[S]$ of $N$, using the almost-binary condition to identify roots and leaves of $N[S]$ with those of $N$. Since a support network of a $p$-rooted network shares the root-set with $N$ by definition, the argument carries over by replacing ``the root of $N$'' with ``the root-set of~$N$.'' 
For the second statement with $\mathcal{X} = \mathcal{A}$, since $N$ is almost-binary, any two edges referred to in condition~(C2) lie in the same maximal zig-zag trail, so conditions (C1) and (C2) on $S \subseteq E(N)$ decompose into the corresponding conditions on each $S \cap E(Z_i)$ independently; this relies only on the almost-binary condition. The cases $\mathcal{X} \in \{\mathcal{B}, \mathcal{C}\}$ follow similarly, since whether an $\mathcal{A}$-admissible subset of $E(N)$ is minimal or minimum also reduces to the corresponding property on each~$Z_i$. Finally, the third and fourth statements follow immediately from the second. This completes the proof.
\end{proof}

As mentioned in Section~\ref{sec:problem}, we focus on the set $\mathcal{B}_N$.
By Theorem~\ref{thm:bijection.ABC} and the characterization of $\mathcal{S}_\mathcal{B}(Z_i)$ for each type of maximal zig-zag trail $Z_i$, originally given as~(5) in~\cite{SuzukiEtAl-2025-WhichPhylogeneticNetworks} and restated here as~\eqref{eq:sequence for minimal support network}, one can explicitly describe each element of $\mathcal{B}_N$.
\begin{align}
  &\mathcal{S}_{\mathcal{B}}(Z_i) = 
  &\begin{cases}
    \left\{ \langle b_1\cdots b_{|E(Z_i)|}\rangle \;\middle|\; 
      \begin{array}{@{}l@{}}
        b_1 = b_{|E(Z_i)|} = 1, \text{ and} \\
        \text{no $\langle 00\rangle$ or $\langle 111\rangle$ occurs}
      \end{array}
    \right\} \\
    \hfill \text{if $Z_i$ is a fence} \\[2ex]
    \left\{ \langle b_1 \cdots b_{|E(Z_i)|}\rangle \;\middle|\; 
      \begin{array}{@{}l@{}}
        \text{no $\langle 00\rangle$ or $\langle 111 \rangle$ occurs} \\
        \text{in any circular ordering}
      \end{array}
    \right\} \\
    \hfill \text{if $Z_i$ is a crown}
  \end{cases} \label{eq:sequence for minimal support network}
\end{align}

\section{NP-Completeness Proof of Level Decision for $k=1$}\label{sec:np-hardness}
In this section, we prove that Problem~\ref{prob:level-k-based decision} is NP-complete even for $k = 1$ by giving a polynomial-time reduction from \textsc{3SAT}, a classical NP-complete problem ([LO2] in~\cite{garey2002computers}).
Let $\varphi = C_1 \wedge \dots \wedge C_\ell$ be a \emph{3-conjunctive normal form} (3-CNF) formula over a set $U = \{v_1, \dots, v_n\}$ of $n$ Boolean variables, where each clause $C_i = \lambda_{i,1} \vee \lambda_{i,2} \vee \lambda_{i,3}$ consists of three literals $\lambda_{i,j} \in \{v_m, \neg v_m \mid m \in [1,n]\}$.
A \emph{truth assignment} for $U$ is a map $\mathbf{v}: U \to \{\mathsf{T}, \mathsf{F}\}$ assigning a truth value to each variable and $\mathbf{v}$ is said to \emph{satisfy} $\varphi$ if every clause $C_i$ of $\varphi$ evaluates to $\mathsf{T}$ under $\mathbf{v}$.
Given a 3-CNF formula $\varphi$, \textsc{3SAT} asks whether $\varphi$ is \emph{satisfiable}, i.e., whether there exists a truth assignment for $U$ that satisfies $\varphi$.

Our reduction proceeds as follows. First, we introduce \emph{variable gadgets} and \emph{clause gadgets} corresponding to the variables and the clauses of a 3-CNF formula $\varphi$, respectively. We then construct the network $N(\varphi)$ by connecting these gadgets to represent the literal composition of each clause. Finally, we prove the correctness of the reduction by showing that $\varphi$ is satisfiable if and only if $N(\varphi)$ is level-1-based.

\subsection*{Variable gadget}
For each variable $v_m$ of $\varphi$, we associate the rooted almost-binary phylogenetic network $N(v_m)$ on $X_{v_m} = \{x_{m,1}, x_{m,2}\}$, which we call a \emph{variable gadget}, as shown in Figure~\ref{fig:variable gadget}(a).
Intuitively, each $N(v_m)$ is used to encode the truth value of $v_m$ by the location of the non-trivial block in its support network. 
More precisely, we have the following lemma.
\begin{figure}[h]
     \centering
     \includegraphics[scale=.4]{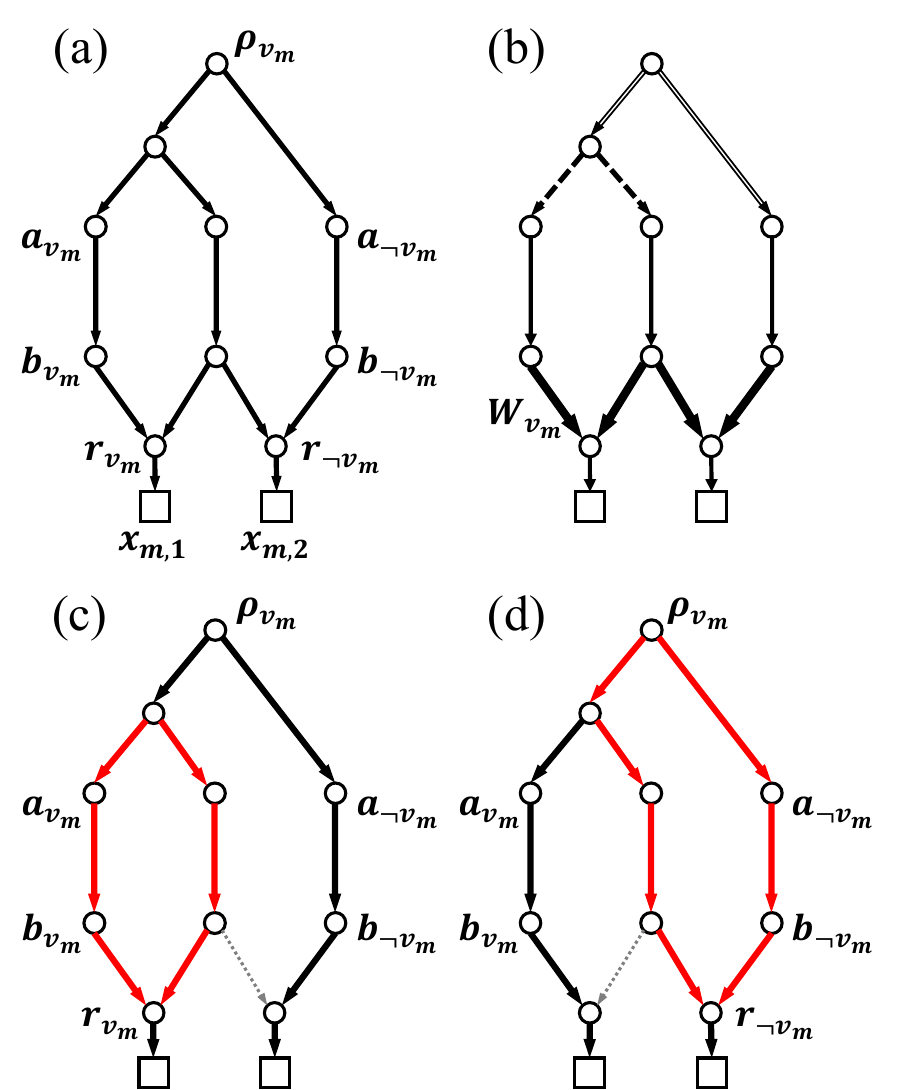}
     \caption{(a) A variable gadget~$N(v_m)$.
     (b) The maximal zig-zag trail decomposition $\mathcal{Z}$ of $N(v_m)$.
     For simplicity, all N-fences of length~$1$ are depicted as thin solid edges, while zig-zag trails of length at least~$2$ are distinguished by different arrow styles.
     (c) A minimal support network $G_\mathsf{T}(v_m)$ of $N(v_m)$.
     (d) A minimal support network $G_\mathsf{F}(v_m)$ of $N(v_m)$.
     In (c) and (d), edges contained in the non-trivial block are shown in red solid line, while the remaining edges are shown in black solid line.}
     \label{fig:variable gadget}
\end{figure}

\begin{lemma}\label{obs:variable gadget}
Let $m\in [1, n]$ and let $N'(v_m)$ be any rooted almost-binary phylogenetic network on~$X_{v_m}$ obtained from $N(v_m)$ shown in Figure~\ref{fig:variable gadget}(a) by subdividing each of the edges $(a_{v_m}, b_{v_m})$ and $(a_{\neg v_m}, b_{\neg v_m})$ zero or more times.
Let $G'_\mathsf{T}(v_m)$ and $G'_\mathsf{F}(v_m)$ be the graphs obtained from $G_\mathsf{T}(v_m)$ and $G_\mathsf{F}(v_m)$, shown in Figure~\ref{fig:variable gadget}(c) and (d), respectively, by applying the same subdivisions.
Then, the following hold.
\begin{enumerate}
  \item $\mathcal{B}_{N'(v_m)} = \{G'_\mathsf{T}(v_m), G'_\mathsf{F}(v_m)\}$ holds.
  \item Each of $G'_\mathsf{T}(v_m)$ and $G'_\mathsf{F}(v_m)$ has one reticulation, $r_{v_m}$ and $r_{\neg v_m}$, respectively.
  \item Every edge on the directed path from $a_{v_m}$ to $b_{v_m}$ is contained in the non-trivial block in $G'_\mathsf{T}(v_m)$, whereas no edge on the directed path from $a_{\neg v_m}$ to $b_{\neg v_m}$ is.
  \item Every edge on the directed path from $a_{\neg v_m}$ to $b_{\neg v_m}$ is contained in the non-trivial block in $G'_\mathsf{F}(v_m)$, whereas no edge on the directed path from $a_{v_m}$ to $b_{v_m}$ is.
\end{enumerate}
\end{lemma}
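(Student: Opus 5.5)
We will use Theorem~\ref{thm:bijection.ABC} together with the explicit description~\eqref{eq:sequence for minimal support network} of $\mathcal{S}_{\mathcal{B}}(Z_i)$. The first step is to compute the maximal zig-zag trail decomposition $\mathcal{Z}$ of $N'(v_m)$ via Theorem~\ref{thm:zig-zag trail decomposition}, reading it off from Figure~\ref{fig:variable gadget}(b). We then note how subdivision acts on this decomposition. Subdividing $(a_{v_m},b_{v_m})$ or $(a_{\neg v_m},b_{\neg v_m})$ inserts vertices of in-degree and out-degree one. In the resulting path, only the first edge (at $a$) and the last edge (at $b$) can share a tail or head with another edge. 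Hence each such first or last edge stays in the trail that contained the original edge, and every internal edge of the path forms its own N-fence of length~$1$. In particular, the types and lengths of the nontrivial trails are unchanged, so $\mathcal{S}_{\mathcal{B}}$ of each trail is the same as for $N(v_m)$.

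Next we apply~\eqref{eq:sequence for minimal support network} trail by trail. An N-fence of length~$1$ has the single admissible sequence $\langle 1\rangle$, and a fence of length~$2$ has only $\langle 11\rangle$. Longer fences forbid both $\langle 00\rangle$ and $\langle 111\rangle$ and require their end edges to be present. For example, a fence of length~$3$ admits only $\langle 101\rangle$, and a W-fence of length~$4$ admits only $\langle 1011\rangle$ and $\langle 1101\rangle$. From the figure we expect exactly one trail $Z^\ast$ with $|\mathcal{S}_{\mathcal{B}}(Z^\ast)|=2$, and all other trails forced. By statement~4 of Theorem~\ref{thm:bijection.ABC} this gives $|\mathcal{B}_{N'(v_m)}|=2$. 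Checking that the two choices yield exactly the edge sets of $G'_\mathsf{T}(v_m)$ and $G'_\mathsf{F}(v_m)$ proves statement~1. I expect $Z^\ast$ to be the W-fence whose two choices keep both in-edges of $r_{v_m}$ or of $r_{\neg v_m}$, respectively.

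For statement~2 we count vertices of in-degree~$2$ in each chosen edge set. Forced trails contribute no reticulations, presumably because each of their middle vertices loses an in-edge. The choice on $Z^\ast$ leaves exactly one vertex with both in-edges, namely $r_{v_m}$ or $r_{\neg v_m}$.

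For statements~3 and~4, note that a non-trivial block must contain an undirected cycle. Such a cycle must pass through the unique reticulation, since a rooted network without reticulations (and with a unique root, which both support networks share) is a tree. In $G'_\mathsf{T}(v_m)$, the two in-edges of $r_{v_m}$ lead back to a common ancestor, and the two resulting paths form a cycle. We verify from the figure that one of these paths is the subdivided path from $a_{v_m}$ to $b_{v_m}$, so every edge of that path lies in the non-trivial block. All remaining edges, including the path from $a_{\neg v_m}$ to $b_{\neg v_m}$, lie on no cycle and are therefore cut edges. Statement~4 follows symmetrically. The main obstacle is simply the careful verification of the decomposition and the forced patterns against the figure; the argument above reduces everything to that check.
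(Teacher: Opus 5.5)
Your proposal is correct and follows essentially the paper's route: read off the decomposition of $N(v_m)$ (one W-fence of length~4, two M-fences of length~2, five N-fences of length~1), apply~\eqref{eq:sequence for minimal support network} and Theorem~\ref{thm:bijection.ABC} to get exactly the two choices $\langle 1011\rangle$ and $\langle 1101\rangle$ on the W-fence, note that the subdivisions merely replace the N-fences $a_{v_m}>b_{v_m}$ and $a_{\neg v_m}>b_{\neg v_m}$ by several N-fences of length~1, and check statements 2--4 on the figure. Two minor fixes: the forced trails contribute no reticulation not because a middle vertex ``loses an in-edge'' (a forced trail keeps all its edges), but because M-fences of length~2 and N-fences of length~1 have no vertex that is the head of two edges; and your general subdivision claim needs the subdivided edge not to share both its tail and its head with other edges (otherwise its trail would split), which holds here since both edges are N-fences of length~1.
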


\begin{proof}
  We first consider the special case $N'(v_m) = N(v_m)$, i.e., no subdivisions are applied.
As shown in Figure~\ref{fig:variable gadget}(b), the maximal zig-zag trail decomposition $\mathcal{Z}$ of $N(v_m)$ consists of one W-fence of length 4 denoted by $W_{v_m}$, two M-fences of length 2, and five N-fences of length 1. By~(1), $W_{v_m}$ has two $\mathcal{B}$-admissible edge-sets $\langle 1011 \rangle$ and $\langle 1101 \rangle$, while each of the other maximal zig-zag trails $Z \in \mathcal{Z}$ has only one $\mathcal{B}$-admissible edge-set, namely $E(Z)$. Therefore, by the third statement of Theorem~\ref{thm:bijection.ABC}, $N(v_m)$ has exactly two minimal support networks $G_\mathsf{T}(v_m)$ and $G_\mathsf{F}(v_m)$, corresponding to the edge subsets $\langle 1011 \rangle$ and $\langle 1101 \rangle$ of $W_{v_m}$, respectively. This proves the first statement. The other statements can then be verified directly by inspecting the non-trivial blocks highlighted in red in Figure~\ref{fig:variable gadget}(c) and~(d). 

The general case follows since the prescribed subdivisions only replace each of the N-fences $a_{v_m} > b_{v_m}$ and $a_{\neg v_m} > b_{\neg v_m}$ in $\mathcal{Z}$ with multiple N-fences of length 1, and thus do not affect the argument above.
This completes the proof.
\end{proof}

\subsection*{Clause gadget}
For each clause $C_i$ of $\varphi$, we associate the 6-rooted almost-binary phylogenetic network $N(C_i)$ on $X_{C_i} = \{y_{i,j}\mid j\in [1,7]\}$, which we call a \emph{clause gadget}, as shown in Figure~\ref{fig:clause gadget}(a). 

\begin{figure}[h]
     \centering
     \includegraphics[scale=0.38]{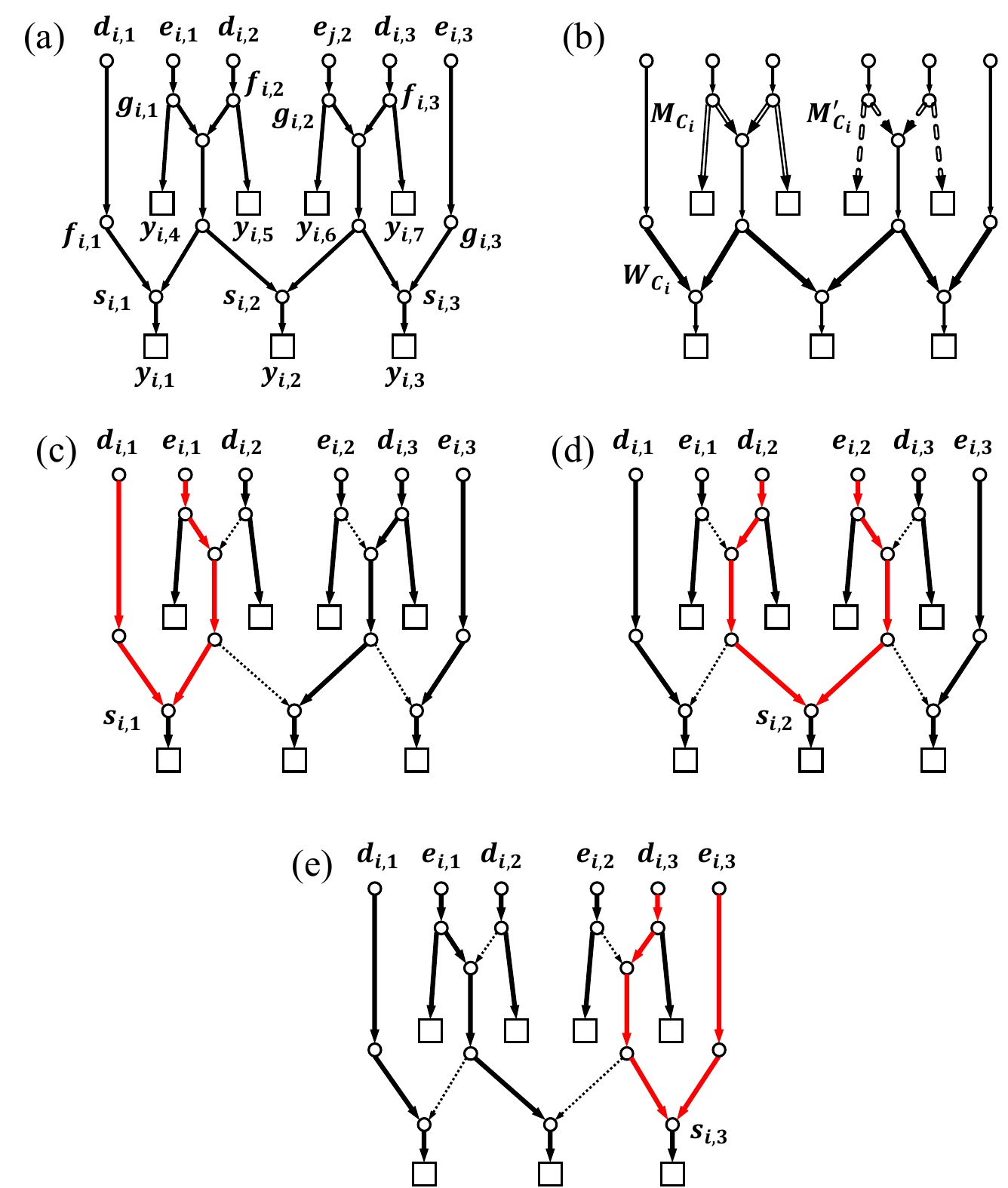}
     \caption{(a) A clause gadget $N(C_i)$.
     (b) The maximal zig-zag trail decomposition $\mathcal{Z}^\prime$ of $N(C_i)$. For simplicity, all N-fences of length~$1$ are depicted as thin solid edges and the other zig-zag trails are distinguished by different arrow styles.
     (c) A minimal support network $G_1(C_i)$ of $N(C_i)$.
     (d) A minimal support network $G_2(C_i)$ of $N(C_i)$.
     (e) A minimal support network $G_3(C_i)$ of $N(C_i)$.
     In (c), (d), and (e), edges that may lie on a directed path to a reticulation are shown in red solid line, while the remaining edges are shown in black solid line.}
     \label{fig:clause gadget}
\end{figure}

\begin{lemma}\label{obs:clause all}
Let $i\in [1, \ell]$ and let $N(C_i)$ be a clause gadget. Then, every support network $G \in \mathcal{B}_{N(C_i)}$ has exactly one reticulation and the reticulation of $G$ has exactly two of the roots of $G$ as ancestors.
\end{lemma}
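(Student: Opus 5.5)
The plan mirrors the proof of Lemma~\ref{obs:variable gadget}. We read off the maximal zig-zag trail decomposition $\mathcal{Z}'$ of $N(C_i)$ from Figure~\ref{fig:clause gadget}(b), classify each trail as a crown, M-fence, W-fence or N-fence, and apply~\eqref{eq:sequence for minimal support network} to list the $\mathcal{B}$-admissible edge-sets of each trail. By the third statement of Theorem~\ref{thm:bijection.ABC}, which holds for $p$-rooted networks and so applies to the $6$-rooted gadget $N(C_i)$, $\mathcal{B}_{N(C_i)}$ is the direct product of these families. The caption names three minimal support networks $G_1(C_i),G_2(C_i),G_3(C_i)$. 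We expect the product to have exactly these three elements: one non-trivial trail (presumably a W-fence or crown) has three $\mathcal{B}$-admissible subsets, and every other trail admits only its full edge-set. First check that no N-fence or M-fence other than the length-$1$ ones contributes a choice. For an M-fence of length $2$ or an N-fence of length $1$, the constraints $b_1=b_{|E(Z)|}=1$ force the whole edge-set, as in the variable gadget.

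Next, for each $\mathcal{B}$-admissible choice on the non-trivial trail, we count reticulations of the resulting support network. A vertex $v$ is a reticulation of $N[S]$ exactly when both in-edges of $v$ in $N$ lie in $S$. In terms of the $0$-$1$ encoding, this happens at a ``valley'' $\cdot<v>\cdot$ of a trail where both adjacent positions are $1$. In a $\mathcal{B}$-admissible sequence, the patterns $\langle 00\rangle$ and $\langle 111\rangle$ are forbidden. Hence at each valley either exactly one in-edge is kept, or both are kept and the neighbouring edges are dropped. Using this local rule, we verify for each of the three choices that exactly one vertex retains in-degree $2$. Every trail whose full edge-set is forced contributes no reticulation, because any in-degree-$2$ vertex of $N$ sits on some non-trivial trail. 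The surviving reticulation is the endpoint of the red paths in Figure~\ref{fig:clause gadget}(c)--(e).

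Finally, for each $G_j(C_i)$ we trace the red edges, that is, all edges lying on directed paths into the unique reticulation $r$. We then check that exactly two of the six roots reach $r$, one through each of the two in-edges of $r$. The remaining roots lead only to leaves through tree-like parts. This is a finite inspection of the figure for three graphs.

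The main obstacle is the first step: correctly identifying $\mathcal{Z}'$ and confirming that the product of admissible sets contains no minimal support network beyond $G_1,G_2,G_3$. If some trail other than the designated one has several admissible sets, every combination must be checked, not just the three depicted. If I find more combinations, I would handle them uniformly with the valley rule above. The reticulation count then depends only on the positions of $\langle 101\rangle$ versus $\langle 11\rangle$ around valleys, and the ancestor claim would be verified for each combination in turn.
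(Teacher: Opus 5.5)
Your overall route is the paper's: decompose $N(C_i)$, apply \eqref{eq:sequence for minimal support network} trail by trail, take the product via Theorem~\ref{thm:bijection.ABC}, and trace back from the reticulation. However, the expectation your plan rests on is false. Besides the W-fence $W_{C_i}$ of length $6$ and eleven N-fences of length $1$, the decomposition $\mathcal{Z}'$ contains two M-fences $M_{C_i}$ and $M'_{C_i}$ of length $4$. Each has two $\mathcal{B}$-admissible sets, $\langle 1011\rangle$ and $\langle 1101\rangle$. So $|\mathcal{B}_{N(C_i)}|=2^2\cdot 3=12$, and $G_1(C_i),G_2(C_i),G_3(C_i)$ are only three of these, singled out for Observation~\ref{obs:clause specific three}. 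Your valley rule does correctly show that the M-fence choices create no reticulation, since $b_1=b_4=1$ forbids keeping both in-edges of the middle valley. So the reticulation count goes through. But those choices change which edges are present and can therefore change ancestor relations. Inspecting the three depicted graphs thus proves the ancestor claim for only $3$ of the $12$ networks. (A smaller slip: a trail whose full edge-set is forced need not be reticulation-free in general, e.g.\ a W-fence of length $2$ as in $H$. It holds here only because the forced trails are the length-$1$ N-fences.)

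Your fallback of checking every combination points the right way, but it lacks the uniform argument the paper uses. Once the unique reticulation $r$ of $G$ is known to lie in $W_{C_i}$, every other vertex of $G$ has in-degree at most $1$. Hence the backward path from each of the two parents of $r$ is unique and ends at a unique root. This gives \emph{at most} two root ancestors for all twelve networks at once. What remains is to show, for every choice on $M_{C_i}$ and $M'_{C_i}$, that these two backward paths share no vertex other than $r$. Equivalently, they must end at distinct roots, because two such paths that meet at some vertex coincide from there on. The lemma only requires two distinct roots here, not the specific pair $\{d_{i,j},e_{i,j}\}$ seen in the figures. Without this step, the second half of the lemma is not proved.
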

\begin{proof}
As shown in Figure~\ref{fig:clause gadget}(b), the maximal zig-zag trail decomposition $\mathcal{Z}^\prime$ of $N(C_i)$ consists of two M-fences of length~$4$, denoted by $M_{C_i}$ and $M^\prime_{C_i}$, one W-fence of length~$6$, denoted by $W_{C_i}$, and 11 N-fences of length~$1$. By~\eqref{eq:sequence for minimal support network}, both $M_{C_i}$ and $M^\prime_{C_i}$ have two $\mathcal{B}$-admissible edge-sets $\langle 1011\rangle$ and $\langle 1101\rangle$, neither of which produces a reticulation; $W_{C_i}$ has three $\mathcal{B}$-admissible edge-sets $\langle 101011\rangle$, $\langle 101101\rangle$, and $\langle 110101\rangle$, each of which produces exactly one reticulation. Each of the other maximal zig-zag trails $Z\in \mathcal{Z}^\prime$ has only one $\mathcal{B}$-admissible edge-set, namely $E(Z)$, which produces no reticulation. Therefore, by the third statement of Theorem~\ref{thm:bijection.ABC}, each $G \in \mathcal{B}_{N(C_i)}$ contains exactly one reticulation, which arises in $W_{C_i}$. Let $r$ denote this unique reticulation of $G$. Tracing the two incoming edges at $r$ backward in $G$ yields two internally vertex-disjoint directed paths that terminate at two distinct roots of $G$, proving that exactly two of the six roots of $G$ are ancestors of $r$.
\end{proof}

By the fourth statement of Theorem~\ref{thm:bijection.ABC}, $N(C_i)$ has exactly $2^2\cdot 3 = 12$ minimal support networks, all of which need to be taken into account when $\varphi$ is unsatisfiable. When $\varphi$ is satisfiable, however, three specific ones among them play a pivotal role.
We define the three minimal support networks of $N(C_i)$ illustrated in Figure~\ref{fig:clause gadget}(c), (d), and (e) as $G_1(C_i)$, $G_2(C_i)$, and $G_3(C_i)$, respectively.
In each of these figures, the directed paths highlighted in red indicate the correspondence between the pair of roots and the reticulation of $N(C_i)$.
This correspondence reflects the fact that $C_i$ consists of three literals and that satisfying $C_i$ requires at least one of them to be true, as stated in the following observation.

\begin{observation}\label{obs:clause specific three}
  Let $i\in [1,\ell]$ and let $N(C_i)$ be a clause gadget.
Then, for each $j \in [1,3]$, only the two roots $d_{i,j}$ and $e_{i,j}$ have the reticulation $s_{i,j}$ as a descendant in $G_j(C_i)$, while the remaining four roots do not have any reticulation as a descendant in $G_j(C_i)$. 

\end{observation}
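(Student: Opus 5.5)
This is a direct verification from the structure of $G_j(C_i)$, with Lemma~\ref{obs:clause all} doing most of the work. By that lemma, each $G_j(C_i)\in\mathcal{B}_{N(C_i)}$ has exactly one reticulation, and exactly two of the six roots are its ancestors. Hence the statement reduces to two facts: the unique reticulation of $G_j(C_i)$ is $s_{i,j}$, and its two root ancestors are $d_{i,j}$ and $e_{i,j}$. Once these hold, the remaining four roots automatically have no reticulation as a descendant, because no other reticulation exists in $G_j(C_i)$.

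First, I will identify the reticulation. By the third statement of Theorem~\ref{thm:bijection.ABC}, $G_j(C_i)$ is determined by a choice of $\mathcal{B}$-admissible edge-sets on $M_{C_i}$, $M'_{C_i}$ and $W_{C_i}$. All other maximal zig-zag trails are N-fences of length $1$ and are kept entirely. For $G_j(C_i)$ the choice on $W_{C_i}$ is one of $\langle 101011\rangle$, $\langle 101101\rangle$, $\langle 110101\rangle$. In each of these sequences the unique pair of consecutive $1$s sits at a single head vertex of the W-fence, and this vertex is the reticulation. I will check from Figure~\ref{fig:clause gadget}(c)--(e) that for $G_j(C_i)$ this head vertex is $s_{i,j}$. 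All other head vertices of $W_{C_i}$ then retain in-degree $1$, and by Lemma~\ref{obs:clause all} no reticulation arises elsewhere.

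Second, I will trace the two incoming edges of $s_{i,j}$ backward in $G_j(C_i)$. Every vertex on these paths other than $s_{i,j}$ has in-degree exactly $1$ in $G_j(C_i)$, since it is not a reticulation. So each backward path is forced and ends at a unique root. Following the red paths in the figures, the two paths pass through the edges of the N-fences of length $1$ and through the edges selected from $M_{C_i}$ or $M'_{C_i}$ in $G_j(C_i)$, and they end at $d_{i,j}$ and $e_{i,j}$.

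The main obstacle is this tracing. The choices on $M_{C_i}$ and $M'_{C_i}$ in $G_j(C_i)$ decide which root each tail vertex of those M-fences is attached to. I have to confirm that these specific choices route the ancestry of $s_{i,j}$ to $d_{i,j}$ and $e_{i,j}$, and not to other roots. This is a finite check over three cases, one per $j$. In each case I will read off the edge selections shown in the figure and follow the unique in-edges up to a root.
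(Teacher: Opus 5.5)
Your proposal is correct and takes essentially the paper's route. The paper offers no separate argument: it states this as an observation read off from the red-highlighted paths in Figure~\ref{fig:clause gadget}(c)--(e), with Lemma~\ref{obs:clause all} ensuring the reticulation arising in $W_{C_i}$ is the only one. That is exactly the reduction to a figure check you describe. Your remark that every vertex other than $s_{i,j}$ has in-degree at most $1$ in $G_j(C_i)$, so the two backward traces give all ancestors of $s_{i,j}$, makes explicit a step the paper leaves implicit.
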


\begin{figure}[t]
    \centering
    \includegraphics[width=0.9\textwidth]{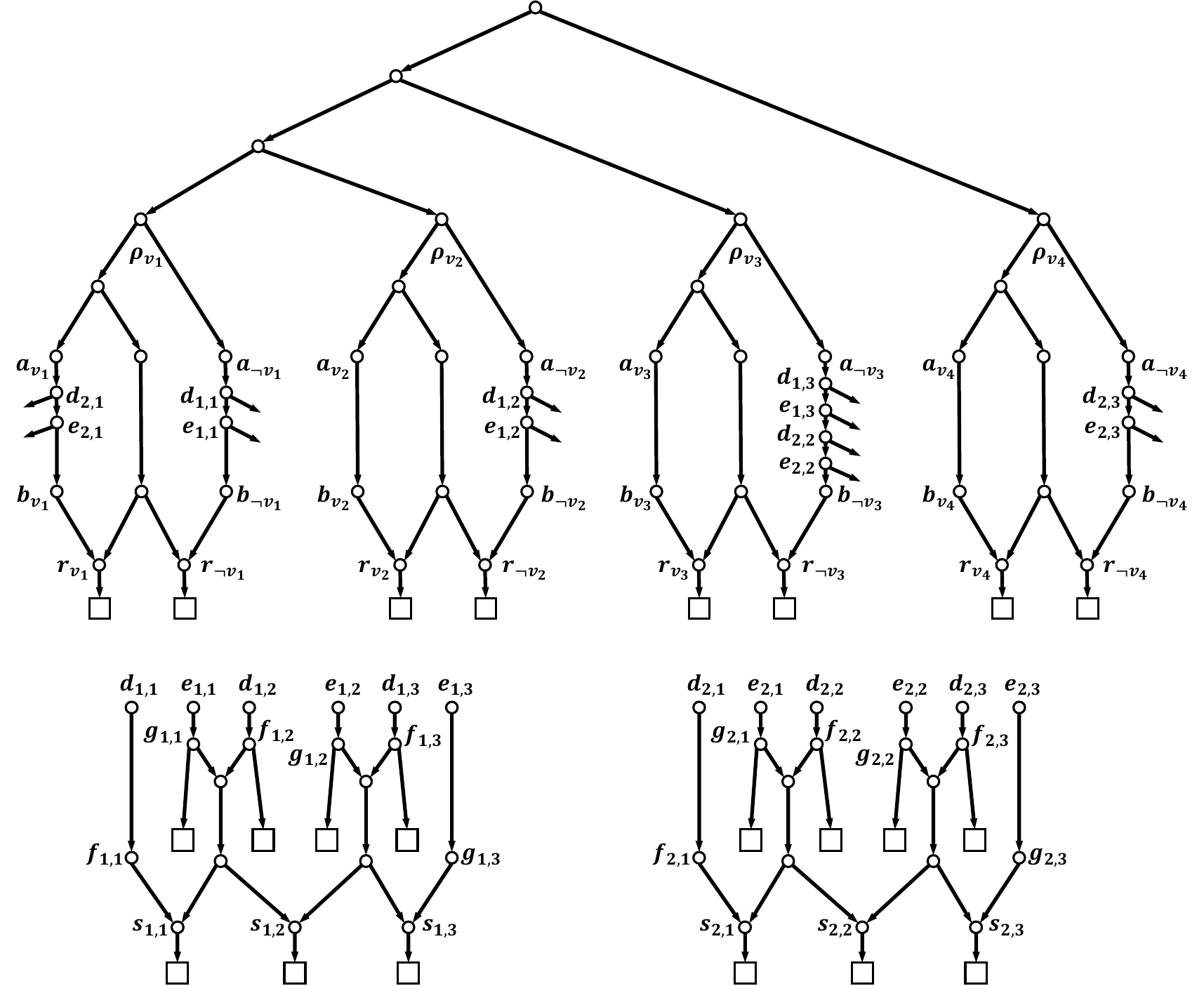}
    \caption{An illustration of $N(\varphi)$
    constructed from the 3-CNF formula $\varphi = (v_1 \vee v_2 \vee v_3)\wedge(\neg v_1 \vee v_3 \vee v_4)$.
    For clarity of presentation, the variable gadgets and the clause gadgets are drawn separately, although they are connected by identifying each pair of vertices labeled with the same label.}
    \label{fig:whole network}
\end{figure}

\subsection*{The construction of $N(\varphi)$}\label{apd:construction}
For a given 3-CNF formula $\varphi$ with $n$ variables, we construct $N(\varphi)$
as follows.
\begin{enumerate}
  \item Pick a rooted binary phylogenetic tree $T$ on $X = \{1, \dots, n\}$ arbitrarily.
  \item Starting from $T$, attach the root $\rho_{v_m}$ of $N(v_m)$ to the leaf $m$ of $T$ for each $m \in [1,n]$.
  \item Subdivide the unique incoming edge at $b_{\neg v_m}$ (if $\lambda_{i,j} = v_m$) or $b_{v_m}$ (if $\lambda_{i,j} = \neg v_m$) by two vertices, and identify each of them with $d_{i,j}$ and $e_{i,j}$ of $N(C_i)$ for each literal $\lambda_{i,j}$ of $\varphi$. Repeating this procedure for all $i \in [1,\ell]$ and $j \in [1,3]$ yields the desired graph $N(\varphi)$.
\end{enumerate}

We remark that, for any 3-CNF formula $\varphi$, the graph $N(\varphi)$ constructed above is a rooted almost-binary phylogenetic network on $X_\varphi = \bigcup_{m\in[1,n]}X_{v_m}\cup \bigcup_{j\in[1,\ell]}X_{C_j}$.
Indeed, when the root of each variable gadget $N(v_m)$ is attached to a leaf of $T$, or when one of the six roots of each clause gadget $N(C_i)$ is attached to a vertex created by subdividing an edge of $N(v_m)$, the attachment transforms the corresponding vertex into one with in-degree~1 and out-degree~2.
The degrees of all other vertices remain unchanged throughout the construction.

As an example, Figure~\ref{fig:whole network} illustrates the rooted almost-binary phylogenetic  network $N(\varphi)$ constructed from $\varphi = (v_1 \vee v_2 \vee v_3)\wedge(\neg v_1 \vee v_3 \vee v_4)$.
Here, the literal $\lambda_{1,1} = v_1$ is represented by connecting the two roots $d_{1,1}$ and $e_{1,1}$ of $N(C_1)$ to the two vertices obtained by subdividing the edge $(a_{\neg v_1}, b_{\neg v_1})\in E(N(v_1))$.

\subsection*{Correctness of the reduction}
For each $m\in [1, n]$, let $N^\prime(v_m)$ be the subdivision of $N(v_m)$ obtained by applying only the subdivisions of $(a_{v_m}, b_{v_m})$ and $(a_{\neg v_m}, b_{\neg v_m})$ as prescribed in step 3 (i.e., without connecting the clause gadgets to the variable gadgets).
Similarly, let $G_\mathsf{T}'(v_m)$ and $G_\mathsf{F}'(v_m)$ be the networks obtained from $G_\mathsf{T}(v_m)$ and $G_\mathsf{F}(v_m)$, respectively, by applying such subdivisions.
Then, the network $N(\varphi)$ can be decomposed into the tree $T$, the variable gadgets $N'(v_m)$ for $m \in [1,n]$, and the clause gadgets $N(C_i)$ for $i \in [1,\ell]$, since the construction of $N(\varphi)$ can be viewed as merely connecting these components.
This decomposition also allows us to characterize the family of minimal support networks of $N(\varphi)$ independently for each component, as stated in the following lemma.

\begin{lemma}\label{lem:construction}
Let $\varphi$ be a 3-CNF formula.
Then, the family $\mathcal{B}_{N(\varphi)}$ of minimal support networks of $N(\varphi)$
is characterized by
\begin{align}
    \mathcal{B}_{N(\varphi)}
    = \{E(T)\} \times \prod_{m=1}^n \mathcal{B}_{N^\prime(v_m)}
      \times \prod_{i=1}^\ell \mathcal{B}_{N(C_i)}.
    \label{eq:lemma 3.5}
\end{align}
\end{lemma}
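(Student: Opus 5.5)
The plan is to apply the third statement of Theorem~\ref{thm:bijection.ABC} to $N(\varphi)$ itself. This reduces the lemma to a statement about its maximal zig-zag trail decomposition $\mathcal{Z}_\varphi$. Specifically, I would show that $\mathcal{Z}_\varphi$ is the disjoint union of three families: the maximal zig-zag trails of $T$, those of each $N'(v_m)$, and those of each $N(C_i)$, each regarded as a $p$-rooted network in its own right. I would also show that for every $Z\in\mathcal{Z}_\varphi$ the family $\mathcal{S}_{\mathcal{B}}(Z)$ computed in $N(\varphi)$ coincides with the one computed in its component. Once both facts are in place, regrouping the factors of $\prod_{Z\in\mathcal{Z}_\varphi}\mathcal{S}_{\mathcal{B}}(Z)$ by component and applying Theorem~\ref{thm:bijection.ABC} to each $N'(v_m)$ and $N(C_i)$ gives the product in \eqref{eq:lemma 3.5}.

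\textbf{Component $T$.} Every non-leaf vertex of $T$ has in-degree at most $1$, and every vertex except the leaves has out-degree $2$. Attaching $\rho_{v_m}$ to leaf $m$ adds a single edge out of $m$. Hence every zig-zag trail of $T$ (as a subgraph of $N(\varphi)$) is either an N-fence of length $1$ or an M-fence $u_1<w>u_2$ at a tree vertex $w$ of out-degree~$2$.

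For such an M-fence of length $2$, formula \eqref{eq:sequence for minimal support network} forces $\langle 11\rangle$. For an N-fence of length $1$, it forces $\langle 1\rangle$. So this component contributes only $\{E(T)\}$. The new edge from leaf $m$ to $\rho_{v_m}$ is itself an N-fence of length~$1$: its tail has out-degree $1$ and its head has in-degree $1$. I would assign it to $T$ or to $N'(v_m)$; either way it contributes the single choice ``keep it''.

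\textbf{Interfaces.} The main work is the interface between clause gadgets and variable gadgets. I expect this to be the obstacle, because one must check that gluing does not merge trails from different components. In step~3, each root $d_{i,j}$ (and likewise $e_{i,j}$) becomes a subdivision vertex on the path from $a_{\neg v_m}$ to $b_{\neg v_m}$, or on the path from $a_{v_m}$ to $b_{v_m}$. That vertex has in-degree $1$ and out-degree $2$. Its incoming edge lies on the subdivided path, whose internal vertices all have in-degree $1$. Hence:
\begin{itemize}
\item[(i)] every edge of the subdivided path has a head of in-degree $1$, so it can only be glued to other edges by a shared tail;
\item[(ii)] the two out-edges at $d_{i,j}$ are the next path edge and the former root edge of $N(C_i)$, and they form a maximal zig-zag trail of the form $<\,>$ only if nothing else extends it.
\end{itemize}

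This last point needs care. The former root edge of $N(C_i)$ may belong to a longer trail inside the clause gadget, for example the W-fence $W_{C_i}$ or an M-fence. If so, the gluing would extend that trail by one edge, and the trail types would change.

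I would handle this by inspecting Figure~\ref{fig:clause gadget}(b) and the shape of the attachment. I expect each root of $N(C_i)$ to have out-degree~$1$, and each root edge to be an N-fence of length $1$. If that holds, the new trail at $d_{i,j}$ is $(\text{path edge}) > d_{i,j} < (\text{root edge})$, an M-fence of length~$2$ with the unique $\mathcal{B}$-choice $\langle 11\rangle$. That choice coincides with keeping both edges separately, which is what the component-wise choices already dictate. Similarly, in $N'(v_m)$ the subdivided path edges are N-fences of length $1$ that are always kept, as in Lemma~\ref{obs:variable gadget}.

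Making this precise amounts to checking that the symmetric difference between $\mathcal{Z}_\varphi$ and the union of the component decompositions consists only of trails whose edges are forced to be kept in every case. Such trails leave the product unchanged, and this completes the argument.
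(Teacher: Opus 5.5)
Your proposal is correct and is essentially the paper's argument. You apply the third statement of Theorem~\ref{thm:bijection.ABC} to $N(\varphi)$, use \eqref{eq:sequence for minimal support network} to see that trails of length at most~$2$ must be kept entirely, and check that steps~2--3 only merge forced N-fences of length~$1$ (the subdivided path edges and the clause root edges $d_{i,j}>f_{i,j}$, $e_{i,j}>g_{i,j}$) into M-fences of length~$2$. Hence no trail of length at least~$3$ is created or destroyed, exactly as in the paper.

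The fact you left as an expectation, that each clause root has out-degree~$1$ and its out-edge is an N-fence, is precisely what the paper reads off Figure~\ref{fig:clause gadget}(b). One small slip: the interface trail consists of two out-edges and should be written $p<d_{i,j}>q$, not with the inequality signs reversed.
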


\begin{proof}
Let $\mathcal{Z}$ be the maximal zig-zag trail decomposition of $N(\varphi)$.
By Theorem~\ref{thm:bijection.ABC}, we have
$\mathcal{B}_{N(\varphi)} = \prod_{Z_i \in \mathcal{Z}} \mathcal{S}_{\mathcal{B}}(Z_i)$.
Moreover, \eqref{eq:sequence for minimal support network} implies that if the length of a maximal zig-zag trail~$Z$ is at most~$2$, then only $E(Z)$ is the $\mathcal{B}$-admissible edge-set of $Z$.
Therefore, it suffices to show that no maximal zig-zag trail of $N(\varphi)$ of length at least~$3$ is newly created or destroyed when $N(\varphi)$ is constructed by connecting the variable gadgets and the clause gadgets to any $T$.

In step~1 of the construction, since any rooted binary phylogenetic tree $T$ on $X = \{1,\dots, n\}$ consists only of $n-1$ M-fences of length~$2$, it contains no maximal zig-zag trail of length at least~$3$.
In step~2, attaching the roots of the variable gadgets to the leaves of $T$ does not change the maximal zig-zag trail decomposition, since, for each $m$, the two outgoing edges from $\rho_{v_m}$ belong to the same maximal zig-zag trail, whereas the incoming edge to $\rho_{v_m}$ belongs to a different maximal zig-zag trail.
In step~3, suppose that a positive literal $v_m$ appears $q$ times in $\varphi$.
Then, the edge $(a_{\neg v_m}, b_{\neg v_m})$ of $N(v_m)$ is subdivided $2q$ times.
The edge $(a_{\neg v_m}, b_{\neg v_m})$ and all edges of the clause gadgets attached
by these subdivisions form only N-fences of length~$1$.
Accordingly, this step transforms these maximal zig-zag trails into
one N-fence of length~$1$ and $2q$ M-fences of length~$2$.
The analogous argument holds when a negative literal $\neg v_m$ appears $q^\prime$ times in $\varphi$. In this case, subdividing the edge $(a_{v_m}, b_{v_m})$ and attaching the corresponding clause gadgets transforms the N-fences of length 1 into one N-fence of length~$1$ and $2q^\prime$ M-fences of length~$2$.
Hence, step~3 neither creates nor destroys any maximal zig-zag trail of length at
least~$3$.
This completes the proof.
\end{proof}

To illustrate the transformation of maximal zig-zag trails in step 3, consider the formula $\varphi = (v_1 \vee v_2 \vee v_3)\wedge(\neg v_1 \vee v_3 \vee v_4)$.
Here, the variable $v_1$ appears once as a positive literal, so the N-fence $a_{\neg v_1}> b_{\neg v_1}$ in $N(v_1)$ and the N-fences $d_{1,1} > f_{1,1}$ and $e_{1,1} > g_{1,1}$ in $N(C_1)$ are transformed into one N-fence $a_{\neg v_1} > d_{1,1}$ and two M-fences $e_{1,1} < d_{1,1} > f_{1,1}$ and $b_{\neg v_1} < e_{1,1} > g_{1,1}$, as shown in Figure~\ref{fig:whole network}.

Inspecting the maximal zig-zag trail decomposition $\mathcal{Z}$ of $N(\varphi)$ considered in the proof of Lemma~\ref{lem:construction}, we see as a byproduct that $\mathcal{Z}$ always contains at least one W-fence arising from a variable gadget, regardless of whether $\varphi$ is satisfiable.
Together with Proposition~\ref{prop:tree-based}, we therefore obtain the following.

\begin{proposition}\label{prop:Nphi is not TB}
  For every 3-CNF formula $\varphi$, we have $\mathrm{level}^{\ast}(N(\varphi)) \geq 1$.
\end{proposition}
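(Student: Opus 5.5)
The plan is to show that $N(\varphi)$ is not tree-based and then combine this with the definition of the base level. By Proposition~\ref{prop:tree-based}, it suffices to exhibit a single W-fence in the maximal zig-zag trail decomposition $\mathcal{Z}$ of $N(\varphi)$. Since $N(\varphi)$ is a rooted ($1$-rooted) almost-binary phylogenetic network on $X_\varphi$, as remarked after the construction, Proposition~\ref{prop:tree-based} applies directly.

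\textbf{Locating the W-fence.} A 3-CNF formula has $n\geq 1$ variables, so pick any variable gadget $N(v_m)$. By the proof of Lemma~\ref{obs:variable gadget}, its decomposition contains the W-fence $W_{v_m}$ of length~$4$. I then argue that $W_{v_m}$ survives unchanged as a maximal zig-zag trail of $N(\varphi)$, reusing the step-by-step analysis in the proof of Lemma~\ref{lem:construction}.

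Step~1 and step~2 only touch the tree $T$ and the root edge entering $\rho_{v_m}$. The incoming edge at $\rho_{v_m}$ lies in a different trail from the outgoing edges of $\rho_{v_m}$, so no trail inside $N(v_m)$ is extended.

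Step~3 only modifies the N-fences $a_{v_m}>b_{v_m}$ and $a_{\neg v_m}>b_{\neg v_m}$ and the length-$1$ N-fences of clause gadgets. It turns them into N-fences and M-fences of length at most~$2$. Hence it neither merges anything into, nor splits, $W_{v_m}$.

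Because the decomposition is unique (Theorem~\ref{thm:zig-zag trail decomposition}), $W_{v_m}$ is a maximal zig-zag trail of $N(\varphi)$. Its type is determined by its own edge orientations and endpoints, so it is still a W-fence.

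\textbf{Concluding.} By Proposition~\ref{prop:tree-based}, $N(\varphi)$ has no support tree. Now suppose $\mathrm{level}^\ast(N(\varphi))=0$. Then some support network $G$ has no block containing a reticulation, so $G$ has no reticulation at all, since every reticulation lies in some block. Then $G$ is a rooted almost-binary network in which every non-root, non-leaf vertex has in-degree~$1$, which makes it a subdivision of a rooted binary phylogenetic tree. One caveat: vertices of out-degree~$1$ are smoothed, and a root of out-degree~$1$ must be handled, which is fine as it is a subdivision of the root edge. So $G$ would be a support tree, a contradiction; this is exactly the equivalence between tree-basedness and $\mathrm{level}^\ast(N)=0$ noted in Section~\ref{subsec:phylo net}. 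Hence $\mathrm{level}^\ast(N(\varphi))\geq 1$.

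The main obstacle is the persistence of $W_{v_m}$: one must check from Figure~\ref{fig:variable gadget}(b) that the endpoints of $W_{v_m}$ are not the vertices $b_{v_m}$, $b_{\neg v_m}$, or $\rho_{v_m}$ in a way that would let new edges extend the trail. I would verify this by noting that the trail's end edges have tails of out-degree~$1$ in $N(\varphi)$, which is unaffected by subdivision.
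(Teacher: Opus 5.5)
Your proposal is correct and follows the paper's argument. The W-fence $W_{v_m}$ of a variable gadget survives unchanged as a maximal zig-zag trail of $N(\varphi)$, as in the proof of Lemma~\ref{lem:construction}. Proposition~\ref{prop:tree-based} then shows $N(\varphi)$ is not tree-based, and the equivalence between tree-basedness and $\mathrm{level}^\ast(N)=0$ gives $\mathrm{level}^\ast(N(\varphi))\geq 1$; you simply spell out the degree-preservation check and that equivalence in more detail than the paper does.
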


We are ready to prove that $\varphi$ is satisfiable if and only if the network $N(\varphi)$ is level-$1$-based, for any 3-CNF formula $\varphi$.

\begin{figure}[t]
    \centering
    \includegraphics[width=0.9\textwidth]{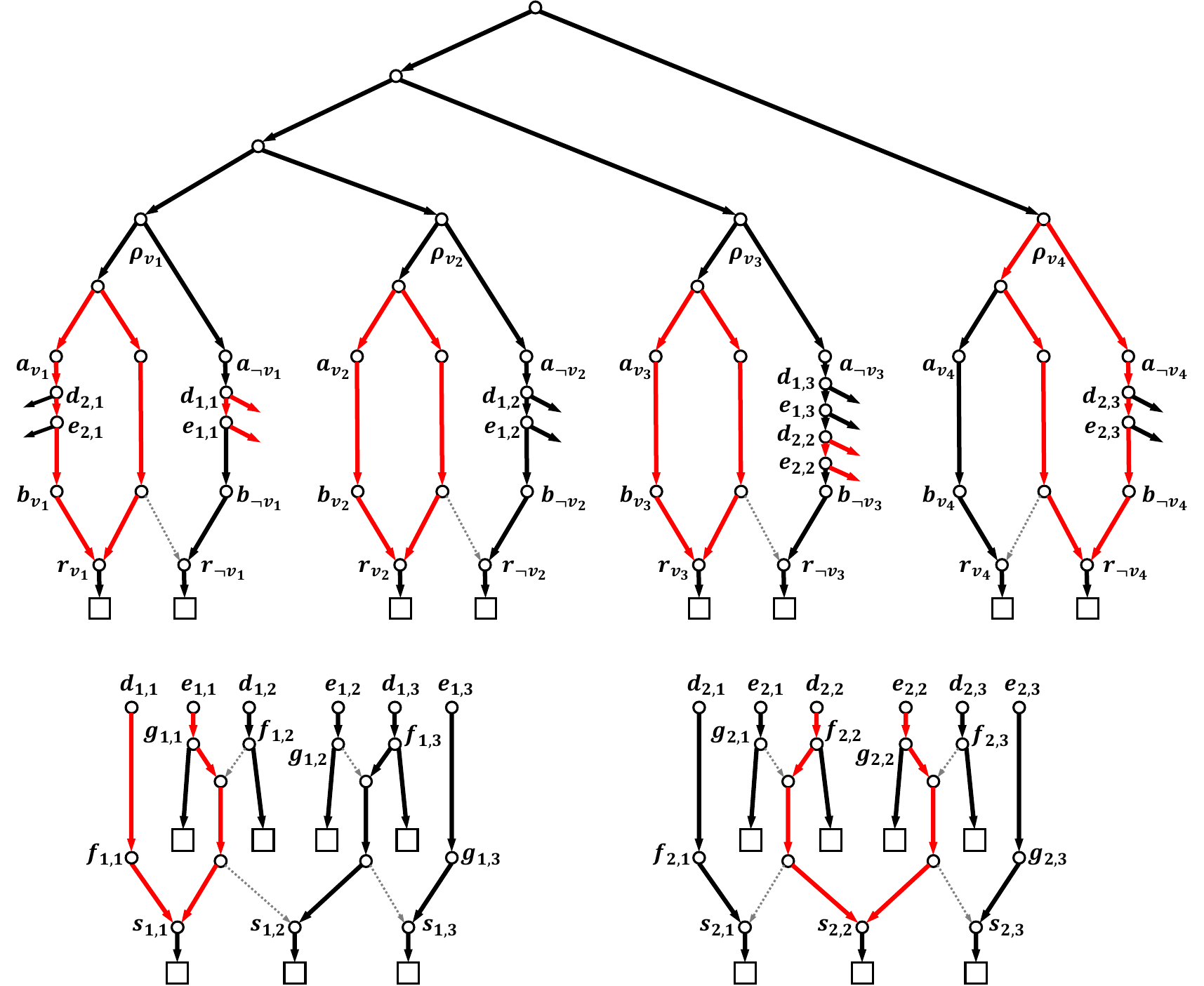}
    \caption{The level-$1$ support network $G$ of $N(\varphi)$ (shown in Figure~\ref{fig:whole network}) constructed following the procedure described in the proof of Theorem~\ref{lem:satisfiable} for $\varphi = (v_1 \vee v_2 \vee v_3)\wedge(\neg v_1 \vee v_3 \vee v_4)$.
    For clarity of presentation, the variable gadgets and the clause gadgets are drawn separately as in Figure~\ref{fig:whole network}. All the edges of $G$ are colored in the same manner as in Figure~\ref{fig:variable gadget}(c) and (d).
    }
    \label{fig:whole network colored}
\end{figure}

\begin{theorem}
\label{lem:satisfiable}
If a 3-CNF formula $\varphi$ is satisfiable, then $N(\varphi)$ has at least one level-$1$ support network.
\end{theorem}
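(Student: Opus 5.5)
Given a satisfying assignment $\mathbf{v}$, I will build an explicit minimal support network $G$ through Lemma~\ref{lem:construction}. I take $E(T)$ for the tree part. For each variable $v_m$ I take $G'_\mathsf{T}(v_m)$ if $\mathbf{v}(v_m)=\mathsf{T}$ and $G'_\mathsf{F}(v_m)$ otherwise. For each clause $C_i$ I fix an index $j_i\in[1,3]$ such that $\lambda_{i,j_i}$ is true under $\mathbf{v}$, and I take $G_{j_i}(C_i)$. By \eqref{eq:lemma 3.5}, the union of these edge sets is an element of $\mathcal{B}_{N(\varphi)}$, so $G$ is a support network. It remains to check that $\mathrm{level}(G)\le 1$.

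First I locate the reticulations of $G$. No reticulation lies in $T$. By Lemma~\ref{obs:variable gadget}, each variable gadget contributes exactly one reticulation, $r_{v_m}$ or $r_{\neg v_m}$. By Observation~\ref{obs:clause specific three}, each clause gadget contributes exactly one reticulation, $s_{i,j_i}$, whose only root ancestors in $N(C_i)$ are $d_{i,j_i}$ and $e_{i,j_i}$. In $N(\varphi)$ these two vertices are subdivision vertices on the edge $(a_{\neg v_m},b_{\neg v_m})$ if $\lambda_{i,j_i}=v_m$, and on $(a_{v_m},b_{v_m})$ if $\lambda_{i,j_i}=\neg v_m$.

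The key step is to show that the non-trivial blocks of $G$ separate these reticulations. Because $\lambda_{i,j_i}$ is true, the path carrying $d_{i,j_i},e_{i,j_i}$ is exactly the one that, by statements 3--4 of Lemma~\ref{obs:variable gadget}, contains no edge of the non-trivial block of the chosen variable-gadget network. Take the case $\lambda_{i,j_i}=v_m$ with $\mathbf{v}(v_m)=\mathsf{T}$. Here $G'_\mathsf{T}(v_m)$ has its block on the $a_{v_m}$--$b_{v_m}$ side, while $d_{i,j_i}$ and $e_{i,j_i}$ lie on the $a_{\neg v_m}$--$b_{\neg v_m}$ path, whose edges are cut edges of the variable part.

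I will argue that the cycle formed in $G$ through $s_{i,j_i}$ consists of the segment $d_{i,j_i}\to e_{i,j_i}$ of that path together with the two red paths of $G_{j_i}(C_i)$ from $d_{i,j_i}$ and $e_{i,j_i}$ to $s_{i,j_i}$. This closed walk shares no edge with the variable block. Hence the block containing $s_{i,j_i}$ contains only the vertices of this cycle plus possibly other clause-gadget edges, and its only reticulation is $s_{i,j_i}$. Other clause gadgets attached along the same path use disjoint segments, since distinct literal occurrences get distinct subdivision pairs and the subdivision vertices are ordered along the path. Those blocks can therefore touch it only at cut vertices. Each variable block is likewise untouched by clause blocks, so it keeps its single reticulation. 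Thus every block contains at most one reticulation, and $\mathrm{level}(G)\le1$.

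The main obstacle is making the block separation rigorous. I would handle it through a cut-edge argument: every edge of the false-literal path, other than the segments between a $d$/$e$ pair, is a cut edge of $G$. Likewise, the edges of $G_{j}(C_i)$ drawn black in Figure~\ref{fig:clause gadget}, and the remaining four roots of $N(C_i)$, attach to the variable path only by trees with no reticulation below them (by Observation~\ref{obs:clause specific three}), so they form trivial blocks. Verifying this requires inspecting the gadget figures, in the same way the proofs of Lemmas~\ref{obs:variable gadget} and~\ref{obs:clause all} do.
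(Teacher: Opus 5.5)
Your proposal is correct and follows the paper's proof. It builds the same element of $\mathcal{B}_{N(\varphi)}$ from $E(T)$, $G'_{\mathbf{v}(v_m)}(v_m)$ and $G_{j_i}(C_i)$ via Lemma~\ref{lem:construction}, and uses the same key point: the path carrying $d_{i,j_i},e_{i,j_i}$ lies outside the variable gadget's non-trivial block, with your cycle and cut-edge argument making explicit the block separation the paper states briefly. One small slip: Observation~\ref{obs:clause specific three} concerns the root ancestors of $s_{i,j_i}$ in $G_{j_i}(C_i)$, not in $N(C_i)$, and the former is what your argument actually uses.
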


\begin{proof}
Assume that $\varphi$ is satisfiable, and fix a truth assignment $\mathbf{v}$ that satisfies $\varphi$. 
For each $m\in [1, n]$, we write $v_m^\ast \coloneqq \mathbf{v}(v_m)$.
Since $\mathbf{v}$ satisfies every clause $C_i$, for each $i \in [1,\ell]$ there exists an index $j \in [1,3]$ such that the literal $\lambda_{i,j}$ evaluates to $\mathsf{T}$ under $\mathbf{v}$; we fix one such index and denote it by $j_i$.
We construct a support network $G$ of $N(\varphi)$ in the same manner as in the construction of $N(\varphi)$, by combining the tree $T$, the minimal support network $G'_{v_m^\ast}(v_m) \in \mathcal{B}_{N'(v_m)}$ for each $m \in [1,n]$, and the minimal support network $G_{j_i}(C_i) \in \mathcal{B}_{N(C_i)}$ for each $i \in [1,\ell]$.
Lemma~\ref{lem:construction} guarantees that $G\in \mathcal{B}_{N(\varphi)}$ holds. 

Then, we prove that $G$ is level-$1$, which is equivalent to showing that no two reticulations in $G$ belong to the same block. 
Note that the set of reticulations in $G$ consists of exactly one of $\{r_{v_m}, r_{\neg v_m}\}$ for each $m\in[1, n]$, and $s_{i,{j_i}}$ for each $i \in [1,\ell]$.
Let $i \in [1,\ell]$.
Observation~\ref{obs:clause specific three} implies that only $d_{i,{j_i}}$ and $e_{i,{j_i}}$ have a reticulation $s_{i,{j_i}}$ as a descendant in $G$  among the six roots of $C_i$.
Then, suppose that $\lambda_{i,{j_i}}$ is a positive literal $v_m$.
By the construction of $N(\varphi)$, the vertices $d_{i,{j_i}}$ and $e_{i,{j_i}}$ are located on the directed path from $a_{\neg v_m}$ to $b_{\neg v_m}$.
By Lemma~\ref{obs:variable gadget}, 
the directed path from 
$a_{\neg v_m}$ to $b_{\neg v_m}$ is not contained in the non-trivial block of $G'_\mathsf{T}(v_m)$.
Hence, $s_{i,{j_i}}$, which hangs from this path, does not share a block with any other reticulation in~$G$.
If instead $\lambda_{i,{j_i}}$ is a negative literal $\neg v_m$, then $d_{i,{j_i}}$ and $e_{i,{j_i}}$ lie on the directed path from $a_{v_m}$ to $b_{v_m}$, which by Lemma~\ref{obs:variable gadget} is not contained in the non-trivial block of $G'_\mathsf{F}(v_m)$, and the same conclusion follows.
Thus the reticulation $s_{i, j_i}$ does not share a block with any reticulation contained in a variable gadget.
Furthermore, since gadgets of the same type are not connected to each other, no two reticulations within clause gadgets, and no two within variable gadgets, share a block. 
Hence $G$ is level-$1$. This completes the proof.
\end{proof}

For illustration, consider the 3-CNF formula $\varphi = (v_1 \vee v_2 \vee v_3)\wedge(\neg v_1 \vee v_3 \vee v_4)$ and the network $N(\varphi)$ shown in Figure~\ref{fig:whole network}. 
Figure~\ref{fig:whole network colored} shows the minimal support network $G$ of $N(\varphi)$ constructed according to the procedure described in the proof of Theorem~\ref{lem:satisfiable}, using the satisfying assignment $(v_1^\ast, v_2^\ast, v_3^\ast, v_4^\ast) = (\mathsf{T}, \mathsf{T}, \mathsf{T}, \mathsf{F})$ and the indices $(j_1, j_2) = (1, 2)$. 
As Figure~\ref{fig:whole network colored} shows, each reticulation of~$G$ is contained in a distinct non-trivial block (depicted in red), confirming that $G$ is indeed level-1.

\begin{theorem}
\label{lem:not satisfiable}
If a 3-CNF formula $\varphi$ is not satisfiable, then $\mathrm{level}^\ast (N(\varphi)) \geq 2$ holds.
\end{theorem}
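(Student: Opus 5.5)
We prove the contrapositive: assuming $\mathrm{level}^\ast(N(\varphi))\le 1$, we extract a satisfying assignment. By Lemma~\ref{lem:BN search correctness} there is a minimal support network $G\in\mathcal{B}_{N(\varphi)}$ with $\mathrm{level}(G)=\mathrm{level}^\ast(N(\varphi))\le 1$. By Lemma~\ref{lem:construction}, $G$ is the combination of $E(T)$, one element of $\mathcal{B}_{N'(v_m)}$ for each $m$, and one element $G(C_i)\in\mathcal{B}_{N(C_i)}$ for each $i$. By the first statement of Lemma~\ref{obs:variable gadget}, each variable part is either $G'_\mathsf{T}(v_m)$ or $G'_\mathsf{F}(v_m)$. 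We define $\mathbf{v}(v_m)=\mathsf{T}$ in the first case and $\mathsf{F}$ in the second, and then show that $\mathbf{v}$ satisfies every clause.

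Fix a clause $C_i$. By Lemma~\ref{obs:clause all}, $G(C_i)$ has a unique reticulation $s$, and exactly two roots of $N(C_i)$ are ancestors of $s$. Call them $d$ and $e$. Tracing the two incoming edges of $s$ backwards gives two internally disjoint directed paths from $d$ and $e$ to $s$. Each of the six roots of $N(C_i)$ is identified with a subdivision vertex on a path $a_{\pm v_m}\to b_{\pm v_m}$ in some variable gadget. The key step is to show that $d$ and $e$ are the pair $\{d_{i,j},e_{i,j}\}$ attached to a single literal $\lambda_{i,j}$, so that they lie on the same variable path. This requires a case check over the 12 elements of $\mathcal{B}_{N(C_i)}$ using Figure~\ref{fig:clause gadget}; the three $\mathcal{B}$-admissible sets of $W_{C_i}$, combined with the choices on $M_{C_i}$ and $M'_{C_i}$, should determine which root pair feeds the reticulation. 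If the ancestral pair can instead be a mixed pair such as $\{e_{i,1},d_{i,2}\}$, the argument has to be modified. In that case I expect the two roots lie on paths of different literals. These paths are joined through the trees above them via $T$, so they would still put $s$ in a cycle with any variable reticulation whose non-trivial block contains either path. The argument below would then apply to each literal used.

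Now suppose that $\mathbf{v}$ falsifies every literal of $C_i$. Take a literal $\lambda_{i,j}=v_m$ whose roots feed $s$; the case $\neg v_m$ is symmetric. Its roots lie on the path from $a_{\neg v_m}$ to $b_{\neg v_m}$. Since $v_m$ is false, $G$ uses $G'_\mathsf{F}(v_m)$, and by the fourth statement of Lemma~\ref{obs:variable gadget} every edge of that path lies in the non-trivial block $B$ containing $r_{\neg v_m}$. The subpath between the two attachment vertices $d$ and $e$, together with the two disjoint paths from $d$ and $e$ to $s$, forms an undirected cycle. This cycle shares an edge with $B$, so it lies in $B$. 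Hence $s$ and $r_{\neg v_m}$ are two distinct reticulations in the same block, and $\mathrm{level}(G)\ge2$, a contradiction. Therefore every clause contains a true literal under $\mathbf{v}$, so $\varphi$ is satisfiable. This contradicts the hypothesis and proves $\mathrm{level}^\ast(N(\varphi))\ge 2$.

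The main obstacle is the gadget-specific step: showing, over all 12 minimal support networks of $N(C_i)$ and not only $G_1,G_2,G_3$, that the reticulation's two root-ancestors attach to the variable path of a single literal. It must also be checked that the connecting cycle really meets an edge of the highlighted block, not just its end vertices $a$ or $b$. I would first try to establish this by reading the three admissible patterns of $W_{C_i}$ directly off Figure~\ref{fig:clause gadget}(b).
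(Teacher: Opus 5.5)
Your overall plan matches the paper's: read a truth assignment off which of $G'_\mathsf{T}(v_m)$, $G'_\mathsf{F}(v_m)$ a minimal support network uses, pick a clause all of whose literals are false, and trap its reticulation in a block with a variable reticulation. Arguing by contrapositive rather than directly makes no difference. The gap is the step you call key: that the two roots feeding the reticulation $s$ of $N(C_i)$ are always the pair $\{d_{i,j},e_{i,j}\}$ of a single literal. Your planned case check would refute this rather than prove it. An inner tail $t$ of $W_{C_i}$ is a parent of two different heads. By Observation~\ref{obs:clause specific three}, $t$ must be fed by a root of $\lambda_{i,j}$ in $G_j(C_i)$ and by a root of $\lambda_{i,j'}$ in $G_{j'}(C_i)$, where $j\neq j'$ index the two heads. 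Which root feeds $t$ is decided outside $W_{C_i}$, namely on $M_{C_i}$ and $M'_{C_i}$, and Theorem~\ref{thm:bijection.ABC} lets you combine those choices freely with the choice on $W_{C_i}$. Recombining them therefore gives minimal support networks whose reticulation is fed by roots of two different literals. This is consistent with Lemma~\ref{obs:clause all}, which only asserts that some two of the six roots are ancestors. Your main line closes the cycle with the edge $(d_{i,j},e_{i,j})$, so it does not cover all twelve networks.

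The fix is your fallback, which is the paper's actual argument and should be your main line. Since all three literals of $C_i$ are false, every one of the six roots of $N(C_i)$ lies on an $a$--$b$ path. Each edge of that path is in the non-trivial block of the corresponding variable support network, by items 3 and 4 of Lemma~\ref{obs:variable gadget}. Now let $x,y$ be whichever roots feed $s$. Close the two internally disjoint paths $x\to s$ and $y\to s$ inside $N(C_i)$ with any $x$--$y$ path that avoids the non-root vertices of $N(C_i)$; such a path exists through the variable gadgets and $T$. Its first edge is necessarily one of the two edges of $x$'s $a$--$b$ path at $x$, because these are the only edges at $x$ outside $N(C_i)$. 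The resulting cycle therefore shares an edge with a non-trivial variable block. So it lies in the block of $G$ containing that block, and that block contains both $s$ and the variable reticulation. Your remaining worry is settled by step~3 of the construction: $d_{i,j}$ and $e_{i,j}$ are consecutive subdivision vertices, so $(d_{i,j},e_{i,j})$ is itself an edge of the $a$--$b$ path and lies in the block.
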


\begin{proof}
Assume that $\varphi$ is not satisfiable.
By Lemma~\ref{lem:BN search correctness}, it suffices to show that every element in $\mathcal{B}_{N(\varphi)}$ contains a block with at least two reticulations.

Let $G$ be an arbitrary element of $\mathcal{B}_{N(\varphi)}$. 
By Lemma~\ref{lem:construction}, $G$ can be decomposed into the tree $T$, a minimal support network of each variable gadget $N'(v_m)$, and a minimal support network of each clause gadget $N(C_i)$.
By the first statement of Lemma~\ref{obs:variable gadget}, for each $m \in [1, n]$, exactly one of $G'_\mathsf{T}(v_m)$ and $G'_\mathsf{F}(v_m)$ is a subgraph of $G$.
This allows us to define a truth assignment $\mathbf{v} \colon \{v_1, \dots, v_n\} \to \{\mathsf{T}, \mathsf{F}\}$ by $\mathbf{v}(v_m) = \mathsf{T}$ if $G \supseteq G'_\mathsf{T}(v_m)$, and $\mathbf{v}(v_m) = \mathsf{F}$ otherwise.
Since $\varphi$ is not satisfiable, there exists a clause $C_i$ such that all three literals $\lambda_{i,1}, \lambda_{i,2}$ and $\lambda_{i,3}$ evaluate to $\mathsf{F}$ under~$\mathbf{v}$.
Consider such $C_i$. 

Let $j\in [1, 3]$ and suppose that $\lambda_{i,j}$ is a positive literal $v_m$. Since $v_m^\ast = \mathsf{F}$, $G'_\mathsf{F}(v_m)$ is used in $G$. Then, by our construction of $N(\varphi)$, the vertices $d_{i,j}$ and $e_{i,j}$ are located on the directed path from $a_{\neg v_m}$ to $b_{\neg v_m}$. 
This path is contained in the non-trivial block of $G'_\mathsf{F}(v_m)$ by Lemma~\ref{obs:variable gadget}.
Similarly, if $\lambda_{i,j} = \neg v_m$, then $v_m^\ast = \mathsf{T}$ and hence $G'_\mathsf{T}(v_m)$ is used in $G$; 
$d_{i,j}$ and $e_{i,j}$ are located on the directed path from $a_{v_m}$ to $b_{v_m}$. 
Lemma~\ref{obs:variable gadget} implies that this path is contained in the non-trivial block of $G'_\mathsf{T}(v_m)$.
Thus, each of the roots of $N(C_i)$ is contained in the non-trivial block of its corresponding variable gadget in $G$.

However, Lemma~\ref{obs:clause all} implies that in any minimal support network of $N(C_i)$, some reticulation $s_{i,j}$ is a descendant of  two of these six roots.
Consequently, $s_{i,j}$ belongs to the same non-trivial block as the reticulation of the corresponding variable gadget, which yields a block with at least two reticulations. This completes the proof.
\end{proof}

By Proposition~\ref{prop:Nphi is not TB}, Theorems~\ref{lem:satisfiable} and~\ref{lem:not satisfiable}, a 3-CNF formula
$\varphi$ is satisfiable if and only if the network $N(\varphi)$ is level-$1$-based.
Moreover, the network $N(\varphi)$ can be constructed in $O(n+\ell)$ time.
We also note that \textsc{Level Decision} belongs to NP for any $k\geq 0$, since the level of any
support network~$G= (V, E)$ can be computed in $O(|V|+|E|)$ time by decomposing it into blocks via depth-first search~\cite{HopcroftTarjan-1973-Algorithm447Efficient} and counting the reticulations in each block. 
Therefore, we obtain the following result.
\begin{theorem}
  \label{thm:level-1 np-h}
  \textsc{Level Decision} is NP-complete for $k=1$. Hence, \textsc{Level Minimization} is NP-hard.
\end{theorem}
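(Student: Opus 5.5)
The proof assembles the pieces already established into the standard two-part NP-completeness argument: membership in NP and NP-hardness via the reduction $\varphi \mapsto (N(\varphi), 1)$ from \textsc{3SAT}.

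\textbf{Membership in NP.} The certificate is a subset $S \subseteq E(N)$. The verifier does the following.
\begin{enumerate}
\item It checks that $N[S]$ is a support network. By Theorem~\ref{thm:bijection.ABC}, this amounts to checking conditions (C1) and (C2) edge by edge, or equivalently checking (P1)--(P3) with the same vertex set, root and leaf set. This takes linear time.
\item It computes the blocks of $N[S]$ by depth-first search~\cite{HopcroftTarjan-1973-Algorithm447Efficient}.
\item It counts the reticulations in each block and compares the maximum with $k=1$.
\end{enumerate}
The certificate has polynomial size, and the whole check runs in $O(|V|+|E|)$ time.

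\textbf{NP-hardness.} Given a 3-CNF formula $\varphi$ with $n$ variables and $\ell$ clauses, we output $(N(\varphi),1)$. First, $N(\varphi)$ must be a valid rooted almost-binary phylogenetic network. This is the remark following the construction: every attachment turns a root of a gadget, or a subdivision vertex, into an in-degree-1, out-degree-2 vertex, and it leaves all other degrees unchanged. Second, the construction is polynomial. $T$ has $O(n)$ edges, each gadget has constant size, and step 3 adds two subdivision vertices per literal, so the total size is $O(n+\ell)$.

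For correctness, Theorem~\ref{lem:satisfiable} shows that satisfiability of $\varphi$ implies $\mathrm{level}^\ast(N(\varphi))\le 1$. Conversely, the contrapositive of Theorem~\ref{lem:not satisfiable} shows that $\mathrm{level}^\ast(N(\varphi))\le 1$ forces $\varphi$ to be satisfiable. Proposition~\ref{prop:Nphi is not TB} shows $\mathrm{level}^\ast(N(\varphi))\ge 1$, so in the satisfiable case the base level is exactly $1$. That fact is not needed for the decision question but is worth recording. Hence $\varphi$ is satisfiable if and only if $(N(\varphi),1)$ is a yes-instance, and \textsc{Level Decision} is NP-complete for $k=1$.

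\textbf{Consequence for \textsc{Level Minimization}.} As noted after Problem~\ref{prob:level-k-based decision}, any algorithm for \textsc{Level Minimization} decides \textsc{Level Decision} by comparing its output value with $k$. A polynomial-time algorithm for \textsc{Level Minimization} would therefore solve an NP-complete problem in polynomial time, so \textsc{Level Minimization} is NP-hard.

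\textbf{Main obstacle.} No step here is genuinely hard, since the substantive work is done in Theorems~\ref{lem:satisfiable} and~\ref{lem:not satisfiable}. The most delicate point is making sure the instance is well-formed and polynomial. In particular, the repeated subdivisions of one edge in step 3, one pair per literal occurrence, must still yield a simple acyclic almost-binary network whose size is linear in $n+\ell$.
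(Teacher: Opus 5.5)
Your proposal is correct and follows essentially the same route as the paper: NP membership by verifying a certificate support network and computing its blocks in linear time, and NP-hardness via the reduction $\varphi \mapsto (N(\varphi),1)$ from \textsc{3SAT} using the two directions already proved (satisfiable implies a level-$1$ support network; unsatisfiable implies base level at least $2$), with \textsc{Level Minimization} inheriting hardness. Your extra remarks are accurate refinements of the paper's brief argument: you explicitly check that the certificate is a support network, and you observe that the lower bound $\mathrm{level}^\ast(N(\varphi))\geq 1$ is not needed for the decision question.
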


\section{An Extension to the NP-completeness Proof of Level Decision for General $k$}\label{sec:extension}
In this section, we show that the proposed reduction in Section~\ref{sec:np-hardness} can be extended to establish the NP-completeness of Problem~\ref{prob:level-k-based decision} for general~$k\geq 2$. This is achieved by applying a systematic modification to each variable gadget used in our construction.

For $k \geq 2$, we define a rooted almost-binary phylogenetic network $N_k(v_m)$ on $X_{v_m} = \{x_{m,1}, x_{m,2}\}$ as follows.
We start with $N(v_m)$, which is depicted in Figure~\ref{fig:level-k variable gadget}(a) with two additional labels $c_{v_m}$ and $d_{v_m}$.
We then repeat the following operation $k-1$ times to obtain $N_k(v_m)$: replace the unique incoming edge at $d_{v_m}$ with a copy of the rooted almost-binary phylogenetic network $H$ shown in Figure~\ref{fig:level-k variable gadget}(b).
For any 3-CNF formula $\varphi$, let $N_k(\varphi)$ be the network constructed by connecting the tree $T$, the variable gadgets $N_k(v_m)$, and the clause gadgets $N(C_i)$ in exactly the same manner as in the construction of $N(\varphi)$.

\begin{figure}[h]
    \centering
    \includegraphics[scale=.38]{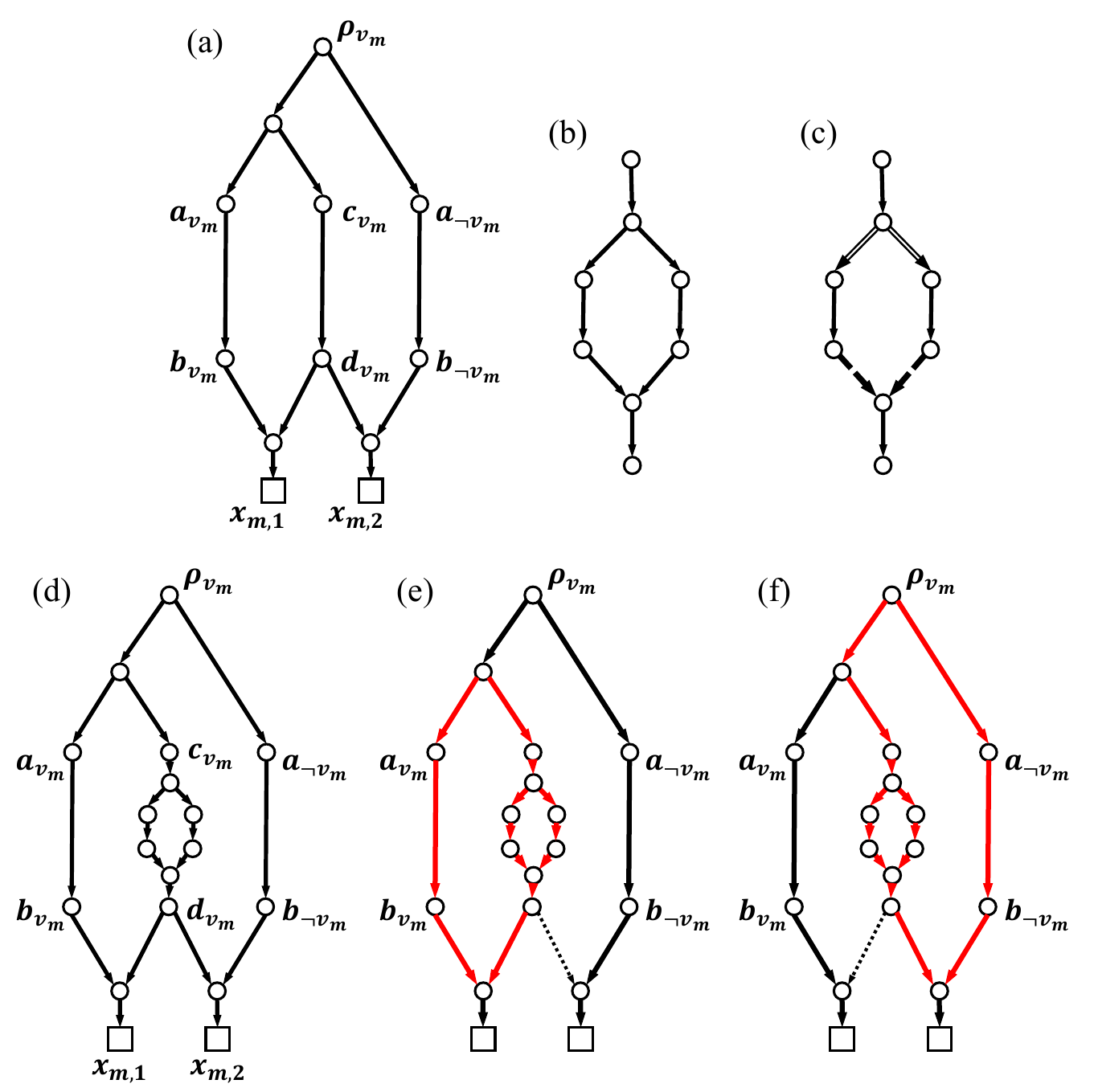}
\caption{(a)~The original variable gadget~$N(v_m)$ shown in Figure~\ref{fig:variable gadget}(a).
(b)~The network~$H$ used in our construction of $N_k(v_m)$. 
(c)~The maximal zig-zag trail decomposition of~$H$. 
(d)~The variable gadget~$N_2(v_m)$.
(e)~A minimal support network~$G_\mathsf{T}^2(v_m)$ of~$N_2(v_m)$.
(f)~A minimal support network~$G_\mathsf{F}^2(v_m)$ of~$N_2(v_m)$.
The depiction of fences and the coloring of edges follow Figure~\ref{fig:variable gadget}.}
    \label{fig:level-k variable gadget}
\end{figure}

Each variable gadget $N_k(v_m)$ contains $k-1$ copies of $H$.
Figure~\ref{fig:level-k variable gadget}(c) shows that each copy of $H$ consists of one M-fence of length~$2$, one W-fence of length~$2$, and four N-fences of length~$1$. 
All of its edges are then contained in every minimal support network of $N_k(v_m)$.
Consequently, $N_k(v_m)$ has exactly two minimal support networks, denoted by $G_\mathsf{T}^k(v_m)$ and $G_\mathsf{F}^k(v_m)$, which are obtained by applying the same attachment to $G_\mathsf{T}(v_m)$ and $G_\mathsf{F}(v_m)$, respectively.
We illustrate $N_k(v_m), G_\mathsf{T}^k(v_m)$ and $G_\mathsf{F}^k(v_m)$ for $k=2$ in
Figure~\ref{fig:level-k variable gadget}(d), (e), and (f), respectively.

We now examine how the arguments of Lemma~\ref{lem:construction}, Proposition~\ref{prop:Nphi is not TB}, Theorems~\ref{lem:satisfiable} and~\ref{lem:not satisfiable} change when $N(\varphi)$ is replaced by $N_k(\varphi)$, and accordingly each $N(v_m)$, $G_\mathsf{T}(v_m)$, and $G_\mathsf{F}(v_m)$ is replaced by $N_k(v_m)$, $G_\mathsf{T}^k(v_m)$, and $G_\mathsf{F}^k(v_m)$, respectively. First, since the gadgets are connected to one another at the same locations as in the construction of $N(\varphi)$, which differ from the locations where the copies of $H$ are inserted in each $N_k(v_m)$, the argument of Lemma~\ref{lem:construction} carries over to $N_k(\varphi)$ without modification. 
The analogue of Proposition~\ref{prop:Nphi is not TB} reads as follows: for each $m \in [1, n]$, both $G_\mathsf{T}^k(v_m)$ and $G_\mathsf{F}^k(v_m)$ contain a block of level $k$, and hence $\mathrm{level}^\ast(N_k(\varphi)) \geq k$.
If $\varphi$ is satisfiable, then the level-$1$ support network $G$ of $N(\varphi)$ constructed in the proof of Theorem~\ref{lem:satisfiable} becomes a level-$k$ support network of $N_k(\varphi)$ when constructed analogously, since each variable gadget now contributes a block of level $k$ instead of $1$. If $\varphi$ is not satisfiable, then the argument in the proof of Theorem~\ref{lem:not satisfiable} extends to $N_k(\varphi)$ to show that every minimal support network of $N_k(\varphi)$ contains a block with at least $k+1$ reticulations.
Combining these observations yields the following corollary.

\begin{corollary}
  \label{cor:level-k-based iff}
If a 3-CNF formula $\varphi$ is satisfiable, then $\mathrm{level}^\ast (N_k(\varphi)) = k$ holds.
Otherwise, $\mathrm{level}^\ast(N_k(\varphi)) \geq k+1$ holds.
\end{corollary}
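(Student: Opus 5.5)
The plan is to mirror the $k=1$ argument, replacing each variable gadget by $N_k(v_m)$ and tracking how many reticulations the variable-gadget block now carries. The first step is to fix the structure of $\mathcal{B}_{N_k(\varphi)}$. Each copy of $H$ has maximal zig-zag trails of length at most $2$, so by \eqref{eq:sequence for minimal support network} its only $\mathcal{B}$-admissible set is the full edge set. The insertion points (the incoming edge at $d_{v_m}$) are disjoint from the subdivided edges $(a_{v_m},b_{v_m})$ and $(a_{\neg v_m},b_{\neg v_m})$. Hence the proof of Lemma~\ref{lem:construction} goes through verbatim:
\[
\mathcal{B}_{N_k(\varphi)}=\{E(T)\}\times\prod_{m}\{G^{k\prime}_\mathsf{T}(v_m),G^{k\prime}_\mathsf{F}(v_m)\}\times\prod_i\mathcal{B}_{N(C_i)},
\]
where primes denote the prescribed subdivisions. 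Together with Lemma~\ref{lem:BN search correctness}, this reduces both bounds to statements about the elements of this product.

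The key structural claim is a $k$-analogue of Lemma~\ref{obs:variable gadget}. In each of $G^{k\prime}_\mathsf{T}(v_m)$ and $G^{k\prime}_\mathsf{F}(v_m)$ there should be exactly one non-trivial block $B$, and $B$ contains exactly $k$ reticulations: one of $r_{v_m}$ or $r_{\neg v_m}$, plus the reticulation of each of the $k-1$ copies of $H$. Statements 3 and 4 of Lemma~\ref{obs:variable gadget} should still hold, i.e.\ $B$ contains the $a_{v_m}$--$b_{v_m}$ path in the $\mathsf{T}$ case and the $a_{\neg v_m}$--$b_{\neg v_m}$ path in the $\mathsf{F}$ case, and not the other one.

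I would prove this by induction on the number of inserted copies. The edge into $d_{v_m}$ lies in the non-trivial block in both $G_\mathsf{T}(v_m)$ and $G_\mathsf{F}(v_m)$, as Figure~\ref{fig:level-k variable gadget}(e),(f) indicates. Replacing an edge of a block by a copy of $H$, in which the two attachment vertices are joined through $H$'s reticulation by two internally disjoint paths, keeps everything $2$-connected with that block. This merges one more reticulation into it and creates no other non-trivial block.

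This step is the main obstacle. It relies on the specific shape of $H$ and on the location of $d_{v_m}$, which is given only in the figure. I would verify it directly by checking that $H$ with its entry and exit vertices identified forms a cycle through its reticulation.

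With the claim in hand, the lower bound $\mathrm{level}^\ast(N_k(\varphi))\ge k$ is immediate, since every minimal support network contains some $G^{k\prime}_{\ast}(v_m)$ and hence a block with $k$ reticulations.

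For the satisfiable case, I would take the same choices as in Theorem~\ref{lem:satisfiable}. Each clause reticulation $s_{i,j_i}$ hangs from a path outside the variable block. Clause gadgets are pairwise disjoint, and so are variable gadgets. So every block has at most $k$ reticulations, giving $\mathrm{level}^\ast=k$.

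For the unsatisfiable case, I would define $\mathbf{v}$ from $G$ as in Theorem~\ref{lem:not satisfiable} and pick a falsified clause $C_i$. By Lemma~\ref{obs:clause all}, the unique reticulation of $G$'s restriction to $N(C_i)$ has two root-ancestors $d_{i,j},e_{i,j}$, and both lie on a path inside the $k$-reticulation block $B$ of the corresponding variable gadget. The two internally disjoint paths from $d_{i,j}$ and $e_{i,j}$ to $s_{i,j}$ then form a cycle with the segment of $B$ between them. Hence $s_{i,j}$ joins $B$, giving at least $k+1$ reticulations in one block, and Lemma~\ref{lem:BN search correctness} finishes the proof.
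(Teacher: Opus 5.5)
Your route is the paper's: carry Lemma~\ref{lem:construction} over, show that each of $G^k_\mathsf{T}(v_m)$ and $G^k_\mathsf{F}(v_m)$ has a block with $k$ reticulations, and rerun Theorems~\ref{lem:satisfiable} and~\ref{lem:not satisfiable}. Your induction over the copies of $H$ only makes explicit what the paper reads off Figure~\ref{fig:level-k variable gadget}. The lower bound and the satisfiable direction are fine.

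The gap is in the unsatisfiable direction. There you say that, by Lemma~\ref{obs:clause all}, the reticulation $s$ of $G$ restricted to $N(C_i)$ has root-ancestors $d_{i,j},e_{i,j}$. Lemma~\ref{obs:clause all} only says that exactly two of the six roots are ancestors of $s$. Pinning the pair down as $\{d_{i,j},e_{i,j}\}$ is Observation~\ref{obs:clause specific three}, and that covers only $G_1(C_i),G_2(C_i),G_3(C_i)$. Here $G$ may restrict to any of the $12$ minimal support networks of $N(C_i)$; the product structure lets it combine the $W_{C_i}$-choice of one $G_j(C_i)$ with the M-fence choices of another. Nothing forces the two ancestor roots to be attached to the same literal. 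If they are attached to literals of two different variables, they sit in the non-trivial blocks of two different variable gadgets. Then there is no ``segment of $B$ between them,'' and your cycle is not available. This is why the paper's proof of Theorem~\ref{lem:not satisfiable} first shows that \emph{all six} roots of the falsified clause lie in non-trivial variable blocks, and then uses only that $s$ has two of them as ancestors.

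The repair is short. Let $r\neq r'$ be the two ancestor roots.
If their literals involve the same variable, the literals coincide because the clause is falsified. So $r,r'$ lie on the same subdivided $a$--$b$ path inside $B$, and your segment argument applies.
Otherwise, walk upward from $r$ and from $r'$ inside their respective variable gadgets to $\rho_{v_m}$ and $\rho_{v_{m'}}$, and join these through $T$. This path avoids the non-root vertices of $N(C_i)$. Together with the two internally disjoint paths from $r,r'$ to $s$, it therefore forms a cycle. The cycle contains the incoming edge of $r$, which lies on the subdivided $a$--$b$ path and hence in the $k$-reticulation block of that gadget. Since a cycle lies in a single block, $s$ belongs to that block, which therefore has at least $k+1$ reticulations.

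A minor point: in the $H$-step, the two attachment vertices are not joined by two internally disjoint paths inside $H$, since $H$'s leaf has in-degree~$1$. But the check you actually propose, a cycle through $H$'s reticulation after identifying $H$'s root and leaf, is exactly what is needed.
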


Similarly to the construction of $N(\varphi)$, one can construct $N_k(\varphi)$ in $O(nk+\ell)$ time. Hence we obtain the following theorem.
\begin{theorem}
  \label{thm:level-k-based hardness}
  \textsc{Level Decision} is NP-complete for every fixed $k\geq 2$.
\end{theorem}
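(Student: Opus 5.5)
The proof splits into membership in NP and NP-hardness. For membership, the argument given before Theorem~\ref{thm:level-1 np-h} applies verbatim for any fixed $k$. The certificate is a spanning subgraph $G$ of $N$. We check in linear time that $G$ satisfies (P1)--(P3) with the same root and leaf set. We then compute the blocks of $G$ by depth-first search~\cite{HopcroftTarjan-1973-Algorithm447Efficient} and count the reticulations in each block, comparing the maximum with $k$.

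For hardness we reduce from \textsc{3SAT} via $\varphi \mapsto (N_k(\varphi), k)$. First we confirm that $N_k(\varphi)$ is a rooted almost-binary phylogenetic network. Each replacement of the incoming edge at $d_{v_m}$ by a copy of $H$ preserves the degree conditions, since $H$ has a single root and a single leaf (identified with the tail and head of the replaced edge). The remaining connections are exactly as in the construction of $N(\varphi)$. The construction has size and running time $O(nk+\ell)$, which is polynomial for fixed $k$.

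Correctness is then exactly Corollary~\ref{cor:level-k-based iff}. If $\varphi$ is satisfiable, then $\mathrm{level}^\ast(N_k(\varphi)) = k \le k$. Otherwise $\mathrm{level}^\ast(N_k(\varphi)) \ge k+1 > k$. Hence $\varphi$ is satisfiable if and only if $(N_k(\varphi),k)$ is a yes-instance of \textsc{Level Decision}.

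The substantive work lies in the corollary, so I would make its two directions explicit. For the upper bound, take the satisfying-assignment support network and use the modified variable gadgets. Each non-trivial block of a variable gadget now contains $r_{v_m}$ or $r_{\neg v_m}$ together with the $k-1$ reticulations of the inserted copies of $H$. Since all edges of each $H$ lie in every minimal support network, we must argue that all $k-1$ copies of $H$ lie in the same block as the gadget's reticulation, which I would read off from Figure~\ref{fig:level-k variable gadget}(e),(f). The clause reticulations still hang off paths outside that block, exactly as in Theorem~\ref{lem:satisfiable}.

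For the lower bound, restrict to $\mathcal{B}_{N_k(\varphi)}$ by Lemma~\ref{lem:BN search correctness} and decompose it as in Lemma~\ref{lem:construction}. The unsatisfiable case then forces some clause reticulation $s_{i,j}$ into a level-$k$ variable block, giving at least $k+1$ reticulations in one block. The main point to verify is that the inserted copies of $H$ create no new maximal zig-zag trails of length at least $3$ and do not alter the zig-zag trails near the attachment paths. Without this, the product decomposition and the characterization of the blocks in $G^k_\mathsf{T}(v_m)$ and $G^k_\mathsf{F}(v_m)$ would not carry over.
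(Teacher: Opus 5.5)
Your proposal is correct and follows the paper's route. You establish NP membership by checking the certificate and computing its blocks in linear time. You give the reduction $\varphi \mapsto (N_k(\varphi),k)$, constructed in $O(nk+\ell)$ time. You derive correctness from the corollary stating that $\mathrm{level}^\ast(N_k(\varphi))=k$ if $\varphi$ is satisfiable and $\geq k+1$ otherwise. Your unpacking of that corollary matches the paper's sketch: the copies of $H$ are forced into every minimal support network, they are absorbed into each variable gadget's non-trivial block, and the product decomposition of Lemma~\ref{lem:construction} carries over unchanged.
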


\section{Conclusion}\label{sec:conclusion}
In this paper, we have settled the conjecture posed in~\cite{SuzukiEtAl-2025-WhichPhylogeneticNetworks} by proving that \textsc{Level Minimization} is NP-hard.
In fact, we have established the stronger result that \textsc{Level Decision} is NP-complete for every fixed integer $k \geq 1$ (Theorem~\ref{thm:level-1 np-h} and~\ref{thm:level-k-based hardness}).
As an immediate consequence, the exact exponential-time algorithm for \textsc{Level Minimization} developed in~\cite{SuzukiEtAl-2025-WhichPhylogeneticNetworks} cannot be improved to a polynomial-time algorithm, unless $\mathrm{P} = \mathrm{NP}$.
Moreover, our hardness result contrasts with the fact that a closely related problem, finding a support network with the fewest reticulations, can be solved in linear time~\cite{SuzukiEtAl-2025-WhichPhylogeneticNetworks}.

A natural direction for future work is to investigate the parameterized complexity of \textsc{Level Minimization}.
Our hardness result rules out fixed-parameter tractability when parameterizing by the base level, unless $\mathrm{P} =\mathrm{NP}$, so future work should instead consider parameters of the input network.
Promising candidates are the \emph{treewidth} and its DAG-oriented variant, the \emph{scanwidth}, both of which tend to be small on phylogenetic networks arising from real biological data~\cite{berry2020scanning}.

Another direction is to seek combinatorial characterizations, or at least necessary conditions, of level-$k$-based networks. Tree-based networks, which coincide with level-0-based networks, admit such a characterization through the maximal zig-zag trail decomposition, namely via the absence of W-fences (Proposition~\ref{prop:tree-based}). It would be of interest to obtain analogous results for level-1-based networks and, more generally, for level-$k$-based networks for each $k\geq 1$. We note that our NP-hardness result rules out characterizations as simple as the one for the tree-based networks.
However, even necessary conditions could lead to faster algorithms for \textsc{Level Decision} by further restricting the search space beyond $\mathcal{B}_N$.

\section*{Acknowledgments}
The author is grateful 
to Momoko Hayamizu, Jurre van Schie, Mike Steel, and participants of the Combinatorial Mathematics Seminar (COMA SEMI) for helpful discussions.
This work is a part of the outcome of research performed under a Waseda University Grant for Special Research Projects (Project number: 2026C-088).

\bibliographystyle{unsrt}
\bibliography{takatora-counting}

\end{document}